\newtheorem{theorem}{Theorem}
\newtheorem{proposition}[theorem]{Proposition}
\newtheorem{remark}[theorem]{Remark}
\newtheorem{lemma}[theorem]{Lemma}
\newtheorem{example}{Example}[section]
\newcommand{\be}{\begin{equation}}
\newcommand{\ee}{\end{equation}}
\newcommand{\bea}{\begin{eqnarray}}
\newcommand{\eea}{\end{eqnarray}}
\newcommand{\ba}{\begin{array}}
	\newcommand{\ea}{\end{array}}
\newcommand{\bean}{\begin{eqnarray*}}
	\newcommand{\eean}{\end{eqnarray*}}
\newcommand{\La}{\Lambda}
\newcommand{\pa}{\partial}
\begin{document}

\title{Generalized Bigraded Toda Hierarchy}
\author{Yue Liu$^1$, Xingjie Yan$^{1,2}$, Jinbiao Wang$^1$, Jipeng Cheng$^{1,2*}$ }
\dedicatory { $^1$ School of Mathematics, China University of
Mining and Technology, Xuzhou, Jiangsu 221116, China\\
$^{2}$ Jiangsu Center for Applied Mathematics (CUMT), \ Xuzhou, Jiangsu 221116, China}
\thanks{*Corresponding author. Email: chengjp@cumt.edu.cn, chengjipeng1983@163.com.}
\begin{abstract}
Bigraded Toda hierarchy $L_1^M(n)=L_2^N(n)$ is generalized to $L_1^M(n)=L_2^{N}(n)+\sum_{j\in \mathbb Z}\sum_{i=1}^{m}q^{(i)}_n\Lambda^jr^{(i)}_{n+1}$, which is the
analogue of the famous constrained KP hierarchy $L^{k}=(L^{k})_{\geq0}+\sum_{i=1}^{m}q_{i}\partial^{-1}r_i$. It is known that different bosonizations of fermionic KP hierarchy will give rise to different kinds of integrable hierarchies. Starting from the fermionic form of constrained KP hierarchy, bilinear equation of this generalized bigraded Toda hierarchy (GBTH) are derived by using 2--component boson--fermion correspondence. Next based upon this, the Lax structure of GBTH is obtained. Conversely, we also derive bilinear equation of GBTH from the corresponding Lax structure.\\

\noindent\textbf{Keywords}: fermionic KP hierarchy; bigraded Toda hierarchy; bilinear equation; Lax equation; bosonization. \\
\textbf{MSC 2020}: 35Q53, 37K10, 35Q51\\
\textbf{PACS}: 02.30.Ik

\end{abstract}

\maketitle

\section{Introduction}
\subsection{Bosonizations of fermionic KP hierarchy}
Sato theory has gained great success in mathematical physics and integrable systems (see \cite{Willox2004,Date1983,Jimbo1983,Mulase1994,Ohta1988} and their references). Among various objects in Sato theory, KP hierarchy is most basic one. Here we would like to discuss the fermionic KP hierarchy \cite{Alexandrov2013,Kac2023,Kac2003,Date1983,Jimbo1983,Kac1998,Miwa2000}\  given by
\begin{align}\label{eq:fermionic KP hierarchy eq1}
\sum_{j\in \mathbb{Z}+1/2}\psi_j^{+}\tau\otimes \psi_{-j}^-\tau=0,\ \tau\in \mathcal{F},
\end{align}
where $\psi_j^\pm(j\in \mathbb{Z}+1/2)$\ are charged free fermions satisfying
\begin{align*}
\psi _j^\lambda\psi_l^\mu+\psi_l^\mu\psi _j^\lambda=\delta_{\lambda+\mu, 0}\delta_{j+l,0},\quad
\lambda,\mu =\pm\ \text{and}\
j,l\in \mathbb{Z}+1/2.
\end{align*}
And $\mathcal{F}$ is the fermionic Fock space spanned by
$
\psi^+_{i_1}\cdots\psi^+_{i_r}\psi^-_{j_1}\cdots\psi^-_{j_s}|0\rangle,\
$$
\ i_1<\cdots <i_r<0,\ j_1<\cdots<j_s<0,\
$
where vacuum vector\ $|0\rangle$\ is defined by
$\psi^\pm_s|0\rangle=0,\ s>0.$
If define charge of $\psi^\pm_{s}=\pm1$, then
$$
\mathcal{F}=\mathop{\bigoplus}\limits_{s\in\mathbb{Z}}\mathcal{F}_s,\quad
\mathcal{F}_s=\big\{a|0\rangle \in \mathcal{F}|\ \text{charge of}\ a=s\big\}.$$\
For $\tau\in \mathcal{F}$\ satisfying fermionic KP hierarchy \eqref{eq:fermionic KP hierarchy eq1}, it can be proved that there exists unique $\mathcal{F}_s$\ such that $\tau\in\mathcal{F}_s$\ (one can refer to Appendix \ref{appA} for more details). By different bosonizations of $\mathcal{F}$, fermionic KP hierarchy will become different kinds of integrable hierarchies \cite{Kac2023,Kac2003,Jimbo1983,Kac1998}, for example, usual KP hierarchy, Toda hierarchy and multi--component KP hierarchy. Next we will state two typical bosonizations \cite{Kac2023,Kac2003,Jimbo1983,Kac1998} of fermionic KP hierarchy.\\

\begin{tcolorbox}[title = {\textbf{Bosonization} \uppercase\expandafter{\romannumeral1}\ :}]
$
\sigma_1 :\mathcal{F}\cong\mathcal{B}_1
=\mathbb{C}[w,w^{-1},t=(t_1=x,t_2,t_3,\cdots )]
$
\end{tcolorbox}
\noindent Here $\sigma _1$\ is defined by
\ $\sigma_1(|0\rangle)=1$\ and
\begin{align*}
\sigma_1\psi^{\pm}(z)\sigma_1^{-1}
=w^{\pm1}z^{\pm w\frac{\partial}{\partial w}}
e^{\pm \xi(t,z)}e^{\mp \xi(\tilde{\pa}_t,z^{-1})},
\end{align*}
where\ $\xi(t,z)=\sum_{j=1}^{\infty}t_j z^j$,
\ $\tilde{\pa}_t=(\pa_{t_1},\pa_{t_2}/2,\pa_{t_3}/3,\cdots)$\ and
$$
\psi^{\pm}(z)=\sum_{k\in \mathbb{Z}+1/2}\psi^\pm_kz^{-k-1/2}.
$$
For $\tau$\ in fermionic KP hierarchy \eqref{eq:fermionic KP hierarchy eq1},\
if assume $\tau\in \mathcal{F}_s$ and set $\sigma_1(\tau)=w^s\tau(t),$\ we can get \cite{Kac2003,Jimbo1983}
 $${\rm Res}_z\tau(t-[z^{-1}])\tau(t'+[z^{-1}])e^{\xi(t-t',z)}=0,$$
where ${\rm Res}_z\sum_i a_iz^i=a_{-1}$ and
$[z^{-1}]=(z^{-1},z^{-2}/2,z^{-3}/3,\cdots)$.
This is the usual KP hierarchy having the following Lax equation \cite{Willox2004,Date1983,Ohta1988}
\begin{align*}
L_{t_p}=[(L^p)_{\geq0},L], \quad p=1,2,\cdots,
\end{align*}
with Lax operator $L=\pa +\sum_{i=1}^\infty u_i\pa^{-i}$ ($\pa=\pa_{t_1}$)
and $\big(\sum_{i}a_i\pa^i\big)_{\geq0}=\sum_{i\geq0}a_i\pa^i$,
where $u_i$ can be expressed by tau function $\tau(t)$\ according to
${\rm Res}_\pa L^p=\pa_x\pa_{t_p}{\log}\tau$
with
${\rm Res }_\pa\sum_{i}a_i\pa^i=a_{-1}$.

In order to express another bosonization, we can rearrange labels of charged free fermions $\psi^\pm_j$\ in the way below \cite{Kac2023,Kac2003,Kac1998,Jimbo1983},
\begin{align}
&\psi^{\pm(1)}_{-Mi-1/2-p}=\psi^\pm_{-(M+N)i-1/2-p},\quad
1\leq p\leq M,\label{eq:rearrange free fermions1}
\\
&\psi^{\pm(2)}_{-Ni-1/2-q}=\psi^\pm_{-(M+N)i-1/2-(M+q)},\quad
1\leq q\leq N,\label{eq:rearrange free fermions2}
\end{align}
where $M,N\geq1$\ and $i\in \mathbb{Z}$,\ then
\begin{align*}
\psi_s^{\lambda(a)}\psi_r^{\mu(b)}+\psi_s^{\mu(b)}\psi_r^{\lambda(a)}
=\delta_{\lambda+\mu,0}\delta_{r+s,0}\delta_{a,b},\quad
s,r\in\mathbb{Z}+1/2,\  \lambda,\mu=+,-\ \text{and}\ a,b=1,2.
\end{align*}
Similarly, we can define generating functions of $\psi_j^{\pm (a)}$\ as follows,
$$\psi^{\pm (a)} (z)=\sum_{j\in \mathbb{Z}+1/2}\psi_j^{\pm (a)} z^{-j-\frac{1}{2}},\quad a=1,2.$$

\begin{tcolorbox}[title = {\textbf{Bosonization}\ \uppercase\expandafter{\romannumeral2}\ :}]
$
\sigma_2 :\mathcal{F}\cong\mathcal{B}_2
=\bigoplus\limits_{l_1,l_2\in \mathbb{Z}}\mathbb{C}Q_1^{l_1}Q_2^{l_2}\otimes \mathbb{C}[\mathbf{t}=(t^{(1)},t^{(2)})]\
$
\end{tcolorbox}
\noindent Here
 $t^{(a)}=(t^{(a)}_1,t^{(a)}_2,\cdots)\ (a=1,2)$\ and
$Q_1Q_2=-Q_2Q_1.$\
And $\sigma _2$\ is defined by
\ $\sigma_2(|0\rangle)=1$,
\begin{align*}
\sigma _2\psi^{\pm(a)}(z)\sigma _2^{-1}
=Q_a^{\pm 1} z^{\pm Q_a\partial Q_a}{\rm exp}\big(\pm \xi(t^{(a)},z)\big){\rm exp}\big(\mp\xi(\tilde{\partial}_{t^{(a)}},z^{-1})\big).
\end{align*}
For $\tau\in \mathcal{F}_s$, if denote
$$\sigma_2(\tau)
=\sum_{n\in\mathbb{Z}}(-1)^{\frac{n(n-1)}{2}}Q_1^{n+s}Q_2^{-n}\tau_{n}(\mathbf{t})
,$$
we have
\begin{align}
&\oint_{C_R}\frac{dz}{2\pi {\bf i}}
\tau_n(\mathbf{t}-[z^{-1}]_1)\tau_{n'}(\mathbf{t}'+[z^{-1}]_1)
z^{n-n'}e^{\xi(t^{(1)}-t^{(1)'},z)}\nonumber\\
=&\oint_{C_r}\frac{dz}{2\pi {\bf i}}
\tau_{n+1}(\mathbf{t}-[z]_2)\tau_{n'-1}(\mathbf{t}'+[z]_2)
z^{n'-n}e^{\xi(t^{(2)}-t^{(2)'},z^{-1})},\label{eq:todabilinear}
\end{align}
which is just 2--Toda hierarchy \cite{Ueno1984,Alexandrov2013,Takasaki2018,Jimbo1983}. Here $\mathbf{t}\pm[z^{-1}]_1=(t^{(1)}\pm[z^{-1}],t^{(2)})$,  $\mathbf{t}\pm[z^{-1}]_2=(t^{(1)},t^{(2)}\pm[z^{-1}])$, and $C_R$ means the circle $|z|=R$ for sufficient large R, while $C_r$ is the circle $|z|=r$ with  sufficient small $r$. Both $C_R$ and $C_r$ are anticlockwise. Then the corresponding Lax equation \cite{Takasaki2018,Ueno1984} is
\begin{align}
\pa_{t^{(1)}_p}L_a(n)=[(L_1^p(n))_{\geq0},L_a(n)],\ \pa_{t^{(2)}_p}L_a(n)=[(L_2^p(n))_{<0},L_a(n)],\quad a=1,2.\label{eq:todalax}
\end{align}
Here Lax operators\ $L_1(n),L_2(n)$ are given by
\begin{align*}
L_1(n)&=\Lambda +a_0+a_1\Lambda^{-1}+\cdots, \\
L_2(n)&=b_{-1}\Lambda^{-1}+b_0+b_1\Lambda+\cdots,
\end{align*}
where $\Lambda$\ is shift operator acting on $f(n)$\ by
$\Lambda(f(n))=f(n+1)$,\
$\big(\sum_{i}a_i\La^i\big)_{P}=\sum_{P}a_i\La^i$ with $P\in\{\geq l,>l,\leq l,<l\}$
and $a_i=a_i(n,\mathbf{t})$, $b_i=b_i(n,\mathbf{t})$.

Now let us give a summary here.\ By Bosonizations \uppercase\expandafter{\romannumeral1}\ and \uppercase\expandafter{\romannumeral2},\  fermionic KP hierarchy
\eqref{eq:fermionic KP hierarchy eq1}\
is corresponding to KP and Toda hierarchies,\ respectively, which is shown in the following figure.

\begin{center}

\tikzset{every picture/.style={line width=0.75pt}}

\begin{tikzpicture}[x=0.75pt,y=0.75pt,yscale=-1,xscale=1]
\draw    (260,50) -- (318.59,108.59) ;
\draw [shift={(320,110)}, rotate = 225] [color={rgb, 255:red, 0; green, 0; blue, 0 }  ][line width=0.75]    (10.93,-3.29) .. controls (6.95,-1.4) and (3.31,-0.3) .. (0,0) .. controls (3.31,0.3) and (6.95,1.4) .. (10.93,3.29)   ;
\draw    (220,50) -- (151.52,108.7) ;
\draw [shift={(150,110)}, rotate = 319.4] [color={rgb, 255:red, 0; green, 0; blue, 0 }  ][line width=0.75]    (10.93,-3.29) .. controls (6.95,-1.4) and (3.31,-0.3) .. (0,0) .. controls (3.31,0.3) and (6.95,1.4) .. (10.93,3.29)   ;
\draw   (160,18) .. controls (160,13.58) and (163.58,10) .. (168,10) -- (312,10) .. controls (316.42,10) and (320,13.58) .. (320,18) -- (320,42) .. controls (320,46.42) and (316.42,50) .. (312,50) -- (168,50) .. controls (163.58,50) and (160,46.42) .. (160,42) -- cycle ;
\draw   (100,118) .. controls (100,113.58) and (103.58,110) .. (108,110) -- (202,110) .. controls (206.42,110) and (210,113.58) .. (210,118) -- (210,142) .. controls (210,146.42) and (206.42,150) .. (202,150) -- (108,150) .. controls (103.58,150) and (100,146.42) .. (100,142) -- cycle ;
\draw   (270,118) .. controls (270,113.58) and (273.58,110) .. (278,110) -- (372,110) .. controls (376.42,110) and (380,113.58) .. (380,118) -- (380,142) .. controls (380,146.42) and (376.42,150) .. (372,150) -- (278,150) .. controls (273.58,150) and (270,146.42) .. (270,142) -- cycle ;

\draw (116,122) node [anchor=north west][inner sep=0.75pt]   [align=left] {KP hierarchy};
\draw (280,122) node [anchor=north west][inner sep=0.75pt]   [align=left] {Toda hierarchy};
\draw (110,70) node [anchor=north west][inner sep=0.75pt]   [align=left] {\footnotesize Bosonization \uppercase\expandafter{\romannumeral1}};
\draw (290,70) node [anchor=north west][inner sep=0.75pt]   [align=left] {\footnotesize Bosonization \uppercase\expandafter{\romannumeral2}};
\draw (170,22) node [anchor=north west][inner sep=0.75pt]   [align=left] {Fermionic KP hierarchy};
\end{tikzpicture}
\end{center}

\subsection{Generalized bigraded Toda hierarchy}

Constrained KP (cKP) hierarchy \cite{Cheng1992,Cheng&Zhang1994} is one of the most important reductions of KP hierarchy,
which is defined by the Lax operator,
\begin{align}
&L^{k}=(L^{k})_{\geq0}+\sum_{i=1}^{m}q_{i}\partial^{-1}r_i,\label{eq:constrained KP Lax formula eq1}\\
&L_{t_p}=[(L^p)_{\geq0},L],\label{eq:constrained KP Lax formula eq2}\\
&q_{i,t_p}=(L^p)_{\geq0}(q_i),\label{eq:constrained KP Lax formula eq3}\\
&r_{i,t_p}=-(L^p)_{\geq0}^{*}(r_i).\label{eq:constrained KP Lax formula eq4}
\end{align}
To indicate the independence on $k$\ and $m$,
the system of
\eqref{eq:constrained KP Lax formula eq1}--\eqref{eq:constrained KP Lax formula eq4}\
is called\
$(k,m)$--cKP hierarchy. The cKP hierarchy \cite{Cheng1992,Cheng&Zhang1994} is related with  AKNS, Yajima--Oikawa and Melnikov hierarchies by choosing different $k$ and $m$.

In \cite{Cheng&Zhang1994}, $(k,m)$--cKP hierarchy
is equivalent to
\begin{align}
&\sum_{i=1}^{m}q_i(t)r_i(t')={\rm Res}_zz^k\psi^+(t,z)\psi^-(t',z),
\label{eq:ckp-equivalent-eq1}\\
&q_i(t)={\rm Res}_zz^{-1}\psi^+(t,z)\psi^-(t',z)q_i(t'+[z^{-1}]),
\label{eq:ckp-equivalent-eq2}\\
&r_i(t)={\rm Res}_zz^{-1}\psi^-(t,z)\psi^+(t',z)r_i(t'-[z^{-1}]),
\label{eq:ckp-equivalent-eq3}
\end{align}
where $\psi^{\pm}(t,z)$\ are KP wave functions related with KP tau function
$\tau_0(t)$\ by
\begin{align*}
\psi^+(t,z)=\frac{\tau_0(t-[z^{-1}])}{\tau_0(t)}e^{\xi(t,z)},
\quad \psi^-(t,z)=\frac{\tau_0(t+[z^{-1}])}{\tau_0(t)}e^{-\xi(t,z)}.
\end{align*}
Further if introduce
$$\tau_{1}^{(i)}(t)= q_i(t)\tau_0(t),\quad \tau_{-1}^{(i)}(t)=r_i(t)\tau_0(t),$$
then
\eqref{eq:ckp-equivalent-eq1}--\eqref{eq:ckp-equivalent-eq3}
will become
\begin{align}
&{\rm Res}_zz^k\tau_0(t-[z^{-1}])\tau_0(t{'}+[z^{-1}])e^{\xi(t-t{'},z)}
=\sum_{i=1}^{m}\tau_{1}^{(i)}(t)\tau_{-1}^{(i)}(t{'}),\label{eq:bilinear-eq1}\\
&{\rm Res}_zz^{-1}\tau_0(t-[z^{-1}])\tau_{1}^{(i)}(t{'}+[z^{-1}])e^{\xi(t-t{'},z)}
=\tau_{1}^{(i)}(t)\tau_0(t{'}),\label{eq:bilinear-eq2}\\
&{\rm Res}_zz^{-1}\tau_{-1}^{(i)}(t-[z^{-1}])\tau_0(t{'}+[z^{-1}])e^{\xi(t-t{'},z)}
=\tau_0(t)\tau_{-1}^{(i)}(t{'}).\label{eq:bilinear-eq3}
\end{align}
If we use the inverse map $\sigma_1^{-1}$ of Bosonization \uppercase\expandafter{\romannumeral1}\
and set
$$\tau_0=\sigma_1^{-1}\big(w^s\tau_0(t)\big)\in \mathcal{F}_s,\quad \tau_l^{(i)}=\sigma_1^{-1}\big(w^{s+l}\tau_l^{(i)}(t)\big)\in \mathcal{F}_{s+l},\ l=\pm 1,$$
we can get
\begin{align}
&\Omega_{(k)}(\tau_0\otimes\tau_0)
=\sum_{i=1}^{m}\tau_{1}^{(i)}\otimes\tau_{-1}^{(i)},\label{eq:fermionic bilinear equation1}\\
&\Omega_{(0)}(\tau_0\otimes\tau_{1}^{(i)})
=\tau_{1}^{(i)}\otimes\tau_0,\label{eq:fermionic bilinear equation2}\\			&\Omega_{(0)}(\tau_{-1}^{(i)}\otimes\tau_0)
=\tau_0\otimes\tau_{-1}^{(i)},\label{eq:fermionic bilinear equation3}
\end{align}
which is called the fermionic $(k,m)$--cKP hierarchy. Here
\begin{align}
 \Omega  _{(s)}=\sum_{i\in \mathbb{Z}+1/2}\psi_i^{+}\otimes \psi_{-i+s}^-={\rm Res}_zz^s\psi^+(z)\otimes \psi^-(z).\label{eq:casimir operator}
 \end{align}

In this paper we will apply
Bosonization \uppercase\expandafter{\romannumeral2}\ to equations
\eqref{eq:fermionic bilinear equation1}--\eqref{eq:fermionic bilinear equation3}\
and set\ $k=M+N$,\ then we will get bilinear equations of Toda hierarchy with some constraints
and further obtain the corresponding Lax structure as follows.
\begin{align}
&L_1^M(n)=L_2^{N}(n)+\sum_{j\in \mathbb Z}\sum_{i=1}^{m}q^{(i)}_n\Lambda^jr^{(i)}_{n+1},\quad M,\ N\geq1,
\label{eq:ckp-lax-operator1}
\\
&\pa_{t^{(1)}_p}L_a(n)=[(L_1^p(n))_{\geq0},L_a(n)],\quad\ \pa_{t^{(2)}_p}L_a(n)=[(L_2^p(n))_{<0},L_a(n)],\quad a=1,2,
\label{eq:ckp-lax-operator2}
\\
&\partial_{ t_{p}^{(1)}}q^{(i)}_n
=\Big(L_1^{p}(n)\Big)_{\geq0}\Big(q^{(i)}_n\Big),\quad\quad\quad\
\partial _{ t_{p}^{(2)}}q^{(i)}_n=\Big(L_2^{p}(n)\Big)_{<0}\Big(q^{(i)}_n\Big),\label{eq:ckp-lax-operator3}\\
&\partial_{ t_{p}^{(1)}}r^{(i)}_n
=-\Big(\big(L_1^{p}(n-1)\big)_{\geq0}\Big)^*\left(r^{(i)}_n\right),\quad
\partial _{ t_{p}^{(2)}}r^{(i)}_n=-\Big(\big(L_2^{p}(n-1)\big)_{<0}\Big)^*\left(r^{(i)}_n\right),\label{eq:ckp-lax-operator4}
\end{align}
where $\big(\sum_{i}a_i\La^i\big)^*=\sum_{i}\La^{-i}a_i$.

Note that \eqref{eq:ckp-lax-operator1} is one reduction of Toda hierarchy, which is a kind of generalization for bigraded Toda hierarchy \cite{Ueno1984,Takasaki2018,Carlet2006}. And this reduction of Toda hierarchy in \eqref{eq:ckp-lax-operator1} should be related with the work in \cite{Konopelchenko1992}, where the Toda reduction is constructed by symmetry constraints. The suitable name for this reduction should be constrained Toda hierarchy, since it comes from the fermionic constrained KP hierarchy. But in \cite{Krichever2022}, constrained Toda hierarchy means another different reduction of Toda hierarchy, which is quite similar to CKP hierarchy\cite{Zabrodin2021,Krichever2021}. So it will be better to call the system \eqref{eq:ckp-lax-operator1}--\eqref{eq:ckp-lax-operator4} to be $(M,N,m)$--generalized bigraded Toda hierarchy (GBTH). Besides above reductions of Toda hierarchy, others contain B--Toda \& C--Toda \cite{Ueno1984}, periodic Toda \cite{Takasaki2018,Ueno1984}, Ablowitz--Ladik hierarchy \cite{Takasaki2018}, Toda hierarchy with constraint of type B \cite{Krichever2023} (also called large BKP hierarchy \cite{Guan}) and so on.

Let us summarize above process to obtain GBTH in the figure below.
\begin{center}
\tikzset{every picture/.style={line width=0.75pt}}

\begin{tikzpicture}[x=0.75pt,y=0.75pt,yscale=-1,xscale=1]

\draw   (90,279) .. controls (90,274.03) and (94.03,270) .. (99,270) -- (331,270) .. controls (335.97,270) and (340,274.03) .. (340,279) -- (340,306) .. controls (340,310.97) and (335.97,315) .. (331,315) -- (99,315) .. controls (94.03,315) and (90,310.97) .. (90,306) -- cycle ;
\draw   (90,379) .. controls (90,374.03) and (94.03,370) .. (99,370) -- (331,370) .. controls (335.97,370) and (340,374.03) .. (340,379) -- (340,406) .. controls (340,410.97) and (335.97,415) .. (331,415) -- (99,415) .. controls (94.03,415) and (90,410.97) .. (90,406) -- cycle ;
\draw   (90,479) .. controls (90,474.03) and (94.03,470) .. (99,470) -- (331,470) .. controls (335.97,470) and (340,474.03) .. (340,479) -- (340,506) .. controls (340,510.97) and (335.97,515) .. (331,515) -- (99,515) .. controls (94.03,515) and (90,510.97) .. (90,506) -- cycle ;
\draw    (215,315) -- (215,355.85) -- (215,368) ;
\draw [shift={(215,370)}, rotate = 270] [color={rgb, 255:red, 0; green, 0; blue, 0 }  ][line width=0.75]    (10.93,-3.29) .. controls (6.95,-1.4) and (3.31,-0.3) .. (0,0) .. controls (3.31,0.3) and (6.95,1.4) .. (10.93,3.29)   ;
\draw    (215,415) -- (215,468) ;
\draw [shift={(215,470)}, rotate = 270] [color={rgb, 255:red, 0; green, 0; blue, 0 }  ][line width=0.75]    (10.93,-3.29) .. controls (6.95,-1.4) and (3.31,-0.3) .. (0,0) .. controls (3.31,0.3) and (6.95,1.4) .. (10.93,3.29)   ;

\draw (140,285) node [anchor=north west][inner sep=0.75pt]   [align=left] {Constrained KP hierarchy};
\draw (110,385) node [anchor=north west][inner sep=0.75pt]   [align=left] {Fermionic constrained KP hierarchy};
\draw (108,485) node [anchor=north west][inner sep=0.75pt]   [align=left] {Generalized bigraded Toda hierarchy};
\draw (110,435) node [anchor=north west][inner sep=0.75pt]   [align=left] {Bosonization\ $\sigma _2$};
\draw (105,335) node [anchor=north west][inner sep=0.75pt]   [align=left] {Ferminization $\sigma_1^{-1}$};

\end{tikzpicture}

\end{center}

This paper is organized in the way below. In Section 2, we firstly apply Bosonization II to fermionic constrained KP hierarchy and obtain the corresponding bilinear equations. Then the Lax structure is derived from the bilinear equations. In Section 3, we show the equivalence of bilinear equations and Lax structure for GBTH by deriving bilinear equations from Lax structure. Finally, some conclusions and discussions are given in Section 4.

\section{From bilinear equations to Lax structure}
In this section, we will start from the fermionic $(k,m)$--cKP hierarchy
\eqref{eq:fermionic bilinear equation1}--\eqref{eq:fermionic bilinear equation3},
and use Bosonization \uppercase\expandafter{\romannumeral2}\ to obtain the bilinear equations for
$(M,N,m)$--GBTH. Then based upon this,  the corresponding Lax equations are obtained.

First of all, we can find from \eqref{eq:rearrange free fermions1}\eqref{eq:rearrange free fermions2}\ and \eqref{eq:casimir operator} that
\begin{align}
\Omega_{(M+N)}=\Omega^{(1)}_{(M)}+\Omega^{(2)}_{(N)},\quad M, N\geq 1,\label{lemma:Omegasum}
\end{align}
where$\ \Omega^{(a)} _{(s)}={\rm Res}_zz^s\psi^{+(a)}(z)\otimes \psi^{-(a)}(z),\ a=1,2$.
Therefore, \eqref{eq:fermionic bilinear equation1}--\eqref{eq:fermionic bilinear equation3}\
will become
\begin{align}
&{\rm Res}_zz^M\psi^{+(1)}(z)\tau_0\otimes \psi^{-(1)}(z)\tau_0
+{\rm Res}_zz^N\psi^{+(2)}(z)\tau_0\otimes \psi^{-(2)}(z)\tau_0
=\sum_{i=1}^{m}\tau_{1}^{(i)}\otimes \tau_{-1}^{(i)},
\label{eq:resfermionic bilinear equation1}\\
&{\rm Res}_z\psi^{+(1)}(z)\tau_0\otimes \psi^{-(1)}(z)\tau_{1}^{(i)}
+{\rm Res}_z\psi^{+(2)}(z)\tau_0\otimes \psi^{-(2)}(z)\tau_{1}^{(i)}
=\tau_{1}^{(i)}\otimes \tau_0,
\label{eq:resfermionic bilinear equation2}\\
&{\rm Res}_z\psi^{+(1)}(z)\tau_{-1}^{(i)}\otimes \psi^{-(1)}(z)\tau_0
+{\rm Res}_z\psi^{+(2)}(z)\tau_{-1}^{(i)}\otimes \psi^{-(2)}(z)\tau_0
=\tau_0\otimes \tau_{-1}^{(i)}.\label{eq:resfermionic bilinear equation3}
\end{align}

If apply\ $\sigma _2\otimes \sigma_2$\ to \eqref{eq:resfermionic bilinear equation1}, and denote
\begin{align*}
&\sigma_2(\tau_0)
=\sum_{n\in\mathbb{Z}}(-1)^{\frac{n(n-1)}{2}}Q_1^{n}Q_2^{-n}\tau_{0,n}(\mathbf{t}),\quad
\sigma_2(\tau_{\pm 1}^{(i)})
=\sum_{n\in\mathbb{Z}}(-1)^{\frac{n(n-1)}{2}}Q_1^{n\pm 1}Q_2^{-n}\tau^{(i)}_{\pm 1,n}(\mathbf{t}),\\
&Q_a\otimes 1=Q'_a,\quad 1\otimes Q_a=Q''_a,\quad
t^{(a)}_i\otimes 1=t^{(a)'}_i,
\quad 1\otimes t^{(a)}_i=t^{(a)''}_i,\quad a=1,2,
\end{align*}
then equation \eqref{eq:resfermionic bilinear equation1}\ can be written into
\begin{align*}
&\sum_{n',n''\in \mathbb{Z}}{\rm Res}_zz^{M+n'-n''}e^{\xi(t^{(1)'}-t^{(1)''},z)}
\tau_{0,n'}(\mathbf{t}'-[z^{-1}]_1)\tau_{0,n''}(\mathbf{t}''+[z^{-1}]_1)
Q_1'^{n'+1}Q_2'^{-n'}Q_1''^{n''-1}Q_2''^{-n''}\\
&-\sum_{n',n''\in \mathbb{Z}}{\rm Res}_zz^{N-n'+n''-2}e^{\xi(t^{(2)'}-t^{(2)''},z)}
\tau_{0,n'+1}(\mathbf{t}'-[z^{-1}]_2)\tau_{0,n''-1}(\mathbf{t}''+[z^{-1}]_2)
Q_1'^{n'+1}Q_2'^{-n'}Q_1''^{n''-1}Q_2''^{-n''}\\
=&\sum_{n',n''\in \mathbb{Z}}\sum_{i=1}^{m}
\tau_{1,n'}^{(i)}(\mathbf{t}')\tau_{-1,n''}^{(i)}(\mathbf{t}'')
Q_1'^{n'+1}Q_2'^{-n'}Q_1''^{n''-1}Q_2''^{-n''}.
\end{align*}
Comparing coefficients of\ $Q'^{n'+1}_1Q'^{-n'}_2Q''^{n''-1}_1Q''^{-n''}_2$,\ one can get
\begin{align}
&\oint_{C_R}\frac{dz}{2\pi {\bf i}}z^{M+n'-n''}
\tau_{0,n'}(\mathbf{t}'-[z^{-1}]_1)
\tau_{0,n''}(\mathbf{t}''+[z^{-1}]_1)
e^{\xi(t^{(1)'}-t^{(1)''},z)}\nonumber\\
=&\oint_{C_r}\frac{dz}{2\pi {\bf i}}z^{-N+n'-n''}
\tau_{0,n'+1}(\mathbf{t}'-[z]_2)
\tau_{0,n''-1}(\mathbf{t}''+[z]_2)
e^{\xi(t^{(2)'}-t^{(2)''},z^{-1})}
+\sum_{i=1}^{m}\tau_{1,n'}^{(i)}(\mathbf{t}')
\tau_{-1,n''}^{(i)}(\mathbf{t}'').\label{eq:propbosonicbilineareq1}
\end{align}
Similarly, we can get from \eqref{eq:resfermionic bilinear equation2}\ and
\eqref{eq:resfermionic bilinear equation3} that
\begin{align}
\bullet\quad&\oint_{C_R}\frac{dz}{2\pi {\bf i}}z^{n'-n''-1}
\tau_{0,n'}(\mathbf{t}'-[z^{-1}]_1)
\tau_{1,n''}^{(i)}(\mathbf{t}''+[z^{-1}]_1)
e^{\xi(t^{(1)'}-t^{(1)''},z)}\nonumber\\
=&-\oint_{C_r}\frac{dz}{2\pi {\bf i}}z^{n'-n''}
\tau_{0,n'+1}(\mathbf{t}'-[z]_2)
\tau_{1,n''-1}^{(i)}(\mathbf{t}''+[z]_2)
e^{\xi(t^{(2)'}-t^{(2)''},z^{-1})}+\tau_{1,n'}^{(i)}(\mathbf{t}')
\tau_{0,n''}(\mathbf{t}''),\label{eq:propbosonicbilineareq2}\\
\bullet\quad&\oint_{C_R}\frac{dz}{2\pi {\bf i}}z^{n'-n''-1}
\tau_{-1,n'}^{(i)}(\mathbf{t}'-[z^{-1}]_1)
\tau_{0,n''}(\mathbf{t}''+[z^{-1}]_1)
e^{\xi(t^{(1)'}-t^{(1)''},z)}\nonumber\\
=&-\oint_{C_r}\frac{dz}{2\pi {\bf i}}z^{n'-n''}
\tau_{-1,n'+1}^{(i)}(\mathbf{t}'-[z]_2)
\tau_{0,n''-1}(\mathbf{t}''+[z]_2)
e^{\xi(t^{(2)'}-t^{(2)''},z^{-1})}
+\tau_{0,n'}(\mathbf{t}')
\tau_{-1,n''}^{(i)}(\mathbf{t}'').\label{eq:propbosonicbilineareq3}
\end{align}

In terms of Hirota bilinear operator \cite{Hirota2004}
\begin{align*}
P(D)f(\mathbf{t})\cdot g(\mathbf{t})
=P(\pa_y)\Big(f(\mathbf{t}+\mathbf{y})g(\mathbf{t}-\mathbf{y})\Big)|_{\mathbf{y}=0},
\end{align*}
where $P(D)=P(D^{(1)},D^{(2)})$,\ $\pa_y=(\pa_{y^{(1)}},\pa_{y^{(2)}})$,\ $D^{(a)}=(D^{(a)}_1,D^{(a)}_2,\cdots)$,\ $y^{(a)}=(y^{(a)}_1,y^{(a)}_2,\cdots)$, we give some examples for \eqref{eq:propbosonicbilineareq1}--\eqref{eq:propbosonicbilineareq3} ($M=N=1$) as follows,
\begin{align*}
&\Big(D_1^{(1)}-D_1^{(2)}\Big)\tau_{0,n+1}\cdot \tau_{0,n}=\sum_{i=1}^m \tau_{1,n}^{(i)}\cdot\tau_{-1,n+1}^{(i)},\\ &\Big(D_2^{(1)}+D_1^{(1)}D_1^{(2)}\Big)\tau_{0,n+1}\cdot \tau_{0,n}=\sum_{i=1}^m D_1^{(1)}\tau_{1,n}^{(i)}\cdot\tau_{-1,n+1}^{(i)},\\
&D^{(1)}_{1}D^{(2)}_{1}\tau_{0,n+1}\cdot\tau_{1,n}^{(i)}
+D^{(2)}_{1}\tau^{(i)}_{1,n+1}\cdot\tau_{0,n}=0,\\ &D^{(1)}_{1}D^{(2)}_{1}\tau_{0,n}\cdot\tau_{1,n+1}^{(i)}
+D^{(2)}_{1}D^{(1)}_{2}\tau^{(i)}_{1,n}\cdot\tau_{0,n+1}=0,\\
&D^{(1)}_{1}D^{(2)}_{1}\tau_{-1,n+1}^{(i)}\cdot\tau_{0,n}
+D^{(2)}_{1}\tau_{0,n+1}\cdot\tau_{-1,n}^{(i)}=0,\\ &D^{(1)}_{1}D^{(2)}_{1}\tau_{-1,n}^{(i)}\cdot\tau_{0,n+1}
+D^{(2)}_{1}D^{(1)}_{2}\tau_{0,n}\cdot\tau_{-1,n+1}^{(i)}=0.
\end{align*}
\begin{remark}
In what follows, we will find that \eqref{eq:propbosonicbilineareq1} is corresponding to the constraint of Toda hierarchy \eqref{eq:ckp-lax-operator1}. While \eqref{eq:propbosonicbilineareq2} and \eqref{eq:propbosonicbilineareq3} are modified Toda hierarchy\cite{Guan}, also called 2--component 1st modified KP hierarchy \cite{Vandeleur2015,Jimbo1983}, that is,
\begin{align}
&\oint_{C_R}\frac{dz}{2\pi {\bf i}}z^{n-n'-1}
\tau_{0,n}(\mathbf{t}-[z^{-1}]_1)
\tau_{1,n'}(\mathbf{t}'+[z^{-1}]_1)
e^{\xi(t^{(1)}-t^{(1)'},z)}\nonumber\\
=&-\oint_{C_r}\frac{dz}{2\pi {\bf i}}z^{n-n'}
\tau_{0,n+1}(\mathbf{t}-[z]_2)
\tau_{1,n'-1}(\mathbf{t}'+[z]_2)
e^{\xi(t^{(2)}-t^{(2)'},z^{-1})}+\tau_{1,n}(\mathbf{t})
\tau_{0,n'}(\mathbf{t}').\label{mTodabilinear}
\end{align}
\end{remark}
\begin{lemma}\label{lemma:mToda}
Given $(\tau_{0,n},\tau_{1,n})$ satisfying modified Toda hierarchy \eqref{mTodabilinear},
\begin{align*}
&\tau_{1,n+1}({\mathbf t}+[z^{-1}]_1)\tau_{0,n}({\mathbf t})
-z\cdot\tau_{1,n}({\mathbf t}+[z^{-1}]_1)\tau_{0,n+1}({\mathbf t})
=-z\cdot\tau_{0,n+1}({\mathbf t}+[z^{-1}]_1)\tau_{1,n}({\mathbf t}),\\
&\tau_{1,n}({\mathbf t}+[z]_2)\tau_{0,n}({\mathbf t})
-z\cdot\tau_{1,n-1}({\mathbf t}+[z]_2)\tau_{0,n+1}({\mathbf t})
=\tau_{0,n}({\mathbf t}+[z]_2)\tau_{1,n}({\mathbf t}).
\end{align*}
\end{lemma}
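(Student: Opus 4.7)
The plan is to derive both identities directly from the modified Toda bilinear equation \eqref{mTodabilinear} by specializing the free variables $(\mathbf{t}',n')$ so that the exponentials $e^{\xi(t^{(a)}-t^{(a)\prime},z^{\pm 1})}$ collapse to polynomials and the ``wrong'' contour integral reduces to a single residue. The key elementary identity I will use is
\begin{equation*}
e^{-\xi([z^{-1}],\zeta)}=1-\zeta/z,\qquad e^{-\xi([z],\zeta^{-1})}=1-z/\zeta,
\end{equation*}
which follows from $\xi([\alpha],\beta)=\sum_{j\ge1}(\alpha\beta)^{j}/j=-\log(1-\alpha\beta)$. I will rename the integration variable in \eqref{mTodabilinear} to $\zeta$ to avoid clashing with the parameter $z$ appearing in the lemma.

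For the first identity I take $\mathbf{t}'=\mathbf{t}+[z^{-1}]_{1}$ and $n'=n+1$ in \eqref{mTodabilinear}. Then $e^{\xi(t^{(1)}-t^{(1)\prime},\zeta)}=1-\zeta/z$ and $e^{\xi(t^{(2)}-t^{(2)\prime},\zeta^{-1})}=1$. The $C_{R}$-integrand becomes $\zeta^{-2}(1-\zeta/z)\,\tau_{0,n}(\mathbf{t}-[\zeta^{-1}]_{1})\tau_{1,n+1}(\mathbf{t}+[z^{-1}]_{1}+[\zeta^{-1}]_{1})$, a formal series in $\zeta^{-1}$ beginning at $\zeta^{-2}$. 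Extracting the coefficient of $\zeta^{-1}$ leaves only the contribution from $\zeta^{-1}/z$ times the constant term of the tau-product, giving $-\tau_{0,n}(\mathbf{t})\tau_{1,n+1}(\mathbf{t}+[z^{-1}]_{1})/z$. The $C_{r}$-integrand is $\zeta^{-1}\tau_{0,n+1}(\mathbf{t}-[\zeta]_{2})\tau_{1,n}(\mathbf{t}+[z^{-1}]_{1}+[\zeta]_{2})$, whose residue at $0$ is the constant term in $\zeta$, namely $\tau_{0,n+1}(\mathbf{t})\tau_{1,n}(\mathbf{t}+[z^{-1}]_{1})$. Together with the product term $\tau_{1,n}(\mathbf{t})\tau_{0,n+1}(\mathbf{t}+[z^{-1}]_{1})$ and multiplication by $-z$, this rearranges to the first claimed identity.

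For the second identity I take $\mathbf{t}'=\mathbf{t}+[z]_{2}$ and $n'=n$. Now $e^{\xi(t^{(1)}-t^{(1)\prime},\zeta)}=1$ while $e^{\xi(t^{(2)}-t^{(2)\prime},\zeta^{-1})}=1-z/\zeta$. The $C_{R}$-integrand is $\zeta^{-1}\tau_{0,n}(\mathbf{t}-[\zeta^{-1}]_{1})\tau_{1,n}(\mathbf{t}+[z]_{2}+[\zeta^{-1}]_{1})$, whose residue is the constant (in $\zeta^{-1}$) term $\tau_{0,n}(\mathbf{t})\tau_{1,n}(\mathbf{t}+[z]_{2})$. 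The $C_{r}$-integrand is $(1-z/\zeta)\tau_{0,n+1}(\mathbf{t}-[\zeta]_{2})\tau_{1,n-1}(\mathbf{t}+[z]_{2}+[\zeta]_{2})$, whose residue at $0$ comes entirely from $-z/\zeta$ times the constant-in-$\zeta$ term, producing $-z\,\tau_{0,n+1}(\mathbf{t})\tau_{1,n-1}(\mathbf{t}+[z]_{2})$. Inserting everything into \eqref{mTodabilinear} and rearranging yields the second identity.

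The main obstacle is choosing the correct specialization of $(\mathbf{t}',n')$: one needs both exponentials to reduce to polynomials and the power $\zeta^{n-n'-1}$ (resp.\ $\zeta^{n-n'}$) to be the precise value that makes only one coefficient of each Laurent series survive the residue. Once this bookkeeping is set up, the computation is essentially forced, since the shifted tau functions contribute no $\zeta$-dependence at leading order; the remaining work is pure rearrangement.
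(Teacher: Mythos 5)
Your proof is correct and is essentially the paper's own argument: the paper proves the lemma in one line by specializing \eqref{mTodabilinear} to a single shift ($\mathbf{t}-\mathbf{t}'=[\lambda^{-1}]_1$ with $n'=n+1$ for the first identity, $\mathbf{t}-\mathbf{t}'=[\lambda]_2$ with $n'=n$ for the second) and evaluating the resulting residues, exactly as you do. Your shift carries the opposite sign ($\mathbf{t}'=\mathbf{t}+[z^{-1}]_1$, resp.\ $\mathbf{t}'=\mathbf{t}+[z]_2$), which turns the exponential factor into the polynomial $1-\zeta/z$ (resp.\ $1-z/\zeta$) instead of a geometric series with a pole to resum, a harmless --- indeed slightly cleaner --- variant of the same computation that yields the identities verbatim.
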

\begin{proof}
The first relation can be obtained by setting $\mathbf{t}-\mathbf{t}'=[\lambda^{-1}]_1$ and $n'=n+1$ in \eqref{mTodabilinear}, while the second one is derived by letting $\mathbf{t}-\mathbf{t}'=[\lambda]_2$ and $n'=n$.
\end{proof}

Next if introduce wave function $\psi_i(n,\mathbf{t},z)$, adjoint wave function $\psi^*_i(n,\mathbf{t},z)$, eigenfunction $q^{(i)}_n(\mathbf{t})$ and adjoint eigenfunction $r^{(i)}_n(\mathbf{t})$ as follows,
\begin{align}
&\psi_1(n,\mathbf{t},z)
\triangleq\frac{\tau_{0,n}(\mathbf{t}-[z^{-1}]_1)}
{\tau_{0,n}(\mathbf{t})}z^ne^{\xi(t^{(1)},z)},\label{eq:todawavefunction1}\\
&\psi_2(n,\mathbf{t},z)
\triangleq\frac{\tau_{0,n+1}(\mathbf{t}-[z]_2)}
{\tau_{0,n}(\mathbf{t})}z^ne^{\xi(t^{(2)},z^{-1})},\label{eq:todawavefunction2}\\
&\psi_1^*(n,\mathbf{t},z)
\triangleq\frac{\tau_{0,n}(\mathbf{t}+[z^{-1}]_1)}
{\tau_{0,n}(\mathbf{t})}z^{-n+1}e^{-\xi(t^{(1)},z)},\label{eq:todawavefunction3}\\
&\psi_2^*(n,\mathbf{t},z)
\triangleq\frac{\tau_{0,n-1}(\mathbf{t}+[z]_2)}
{\tau_{0,n}(\mathbf{t})}z^{-n+1}e^{-\xi(t^{(2)},z^{-1})},\label{eq:todawavefunction4}\\
&q^{(i)}_n(\mathbf{t})=\frac{\tau_{1,n}^{(i)}(\mathbf{t})}{\tau_{0,n}(\mathbf{t})},\quad
r^{(i)}_n(\mathbf{t})=\frac{\tau_{-1,n}^{(i)}(\mathbf{t})}{\tau_{0,n}(\mathbf{t})},\label{eq:q,r-tau}
\end{align}
then \eqref{eq:propbosonicbilineareq1}--\eqref{eq:propbosonicbilineareq3} can be written into
\begin{align}
\bullet\quad&\oint_{C_R}\frac{dz}{2\pi {\bf i}}
z^{M-1}\psi_1(n',\mathbf{t}',z)\psi^*_1(n'',\mathbf{t}'',z)\nonumber\\
=&\oint _{C_r}\frac{dz}{2\pi {\bf i}}z^{-N-1}\psi_2(n',\mathbf{t}',z)\psi^*_2(n'',\mathbf{t}'',z)
+\sum_{i=1}^{m}q^{(i)}_{n'}(\mathbf{t}')r^{(i)}_{n''}(\mathbf{t}''),
\label{eq:bosoniceq1'}
\\
\bullet\quad&\oint_{C_R}\frac{dz}{2\pi {\bf i}}
z^{-2}\psi_1(n',\mathbf{t}',z)
q^{(i)}_{n''}(\mathbf{t}''+[z^{-1}]_1)
\psi_1^*(n'',\mathbf{t}'',z)\nonumber\\
=&-\oint_{C_r}\frac{dz}{2\pi {\bf i}}
z^{-1}\psi_2(n',\mathbf{t}',z)
q^{(i)}_{n''-1}(\mathbf{t}''+[z]_2)
\psi_2^*(n'',\mathbf{t}'',z)
+q^{(i)}_{n'}(\mathbf{t}'),\label{eq:bosoniceq2'}
\\
\bullet\quad&\oint _{C_R}\frac{dz}{2\pi {\bf i}}
z^{-2}r^{(i)}_{n'}(\mathbf{t}'-[z^{-1}]_1)\psi_1(n',\mathbf{t}',z)
\psi_1^*(n'',\mathbf{t}'',z)\nonumber\\
=&-\oint_{C_r}\frac{dz}{2\pi {\bf i}}
z^{-1}r^{(i)}_{n'+1}(\mathbf{t}'-[z]_2)\psi_2(n',\mathbf{t}',z)
\psi_2^*(n'',\mathbf{t}'',z)
+r^{(i)}_{n''}(\mathbf{t}'').\label{eq:bosoniceq3'}
\end{align}
Before further discussion, we need the lemma below by Lemma \ref{lemma:mToda}.
\begin{lemma}\label{lemma:lambda-qr}
Given $q_n^{(i)}$, $r_n^{(i)}$, $\psi_j$ and $\psi^*_j$ defined in
\eqref{eq:todawavefunction1}--\eqref{eq:q,r-tau},
\begin{align}
&q^{(i)}_{n}(\mathbf{t}+[z^{-1}]_1)
\psi_1^*(n,\mathbf{t},z)
=-z\cdot\iota_{\Lambda}(\Lambda -1)^{-1}
\Big(
\psi^*_1(n+1,\mathbf{t},z)\cdot q_{n}^{(i)}(\mathbf{t})
\Big),\label{eq:fay-qr1}
\\
&q^{(i)}_{n-1}(\mathbf{t}+[z]_2)
\psi_2^*(n,\mathbf{t},z)=
\iota_{\Lambda^{-1}}(\Lambda -1)^{-1}
\Big(
\psi_2^*(n+1,\mathbf{t},z)\cdot
q_{n}^{(i)}(\mathbf{t})
\Big),\label{eq:fay-qr2}
\\
&r^{(i)}_{n}(\mathbf{t}-[z^{-1}]_1)\psi_1(n,\mathbf{t},z)
=z\cdot\iota_{\Lambda^{-1}}(\Lambda -1)^{-1}
\Big(
\psi_1(n,\mathbf{t},z)\cdot r_{n+1}^{(i)}(\mathbf{t})
\Big),\label{eq:fay-qr3}
\\
&r^{(i)}_{n+1}(\mathbf{t}-[z]_2)\psi_2(n,\mathbf{t},z)
=-\iota_{\Lambda}(\Lambda-1)^{-1}
\Big(
\psi_2(n,\mathbf{t},z)\cdot r_{n+1}^{(i)}(\mathbf{t})
\Big),\label{eq:fay-qr4}
\end{align}
where right hand sides of above relations are computed by $\Big(\sum_i a_i\Lambda^i\Big)(z^{\pm n})=\sum_i a_iz^{\pm(i+n)}$ and $\iota_{\Lambda}(\Lambda-1)^{-1}=-\sum_{i\geq 0}\Lambda^i$ and $\iota_{\Lambda^{-1}}(\Lambda -1)^{-1}=\sum_{i\geq 1}\Lambda^{-i}$.
\end{lemma}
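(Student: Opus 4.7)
The plan is to reduce each of the four Fay-type identities \eqref{eq:fay-qr1}--\eqref{eq:fay-qr4} to an elementary one-step telescoping recurrence in the discrete variable $n$, whose iterated limit reproduces the action of $\iota_\Lambda(\Lambda-1)^{-1}$ or $\iota_{\Lambda^{-1}}(\Lambda-1)^{-1}$. First I would rewrite each side in tau-function form using \eqref{eq:todawavefunction1}--\eqref{eq:q,r-tau}, so that the assertion becomes an identity among products of $\tau$-functions multiplied by an explicit prefactor $z^{\pm n}e^{\pm\xi}$.

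For \eqref{eq:fay-qr1}, set $H_n(z)=q_n^{(i)}(\mathbf{t}+[z^{-1}]_1)\psi_1^*(n,\mathbf{t},z)$ and $K_n(z)=\psi_1^*(n+1,\mathbf{t},z)q_n^{(i)}(\mathbf{t})$. Dividing the first Fay relation of Lemma \ref{lemma:mToda} by $z\,\tau_{0,n}(\mathbf{t})\tau_{0,n+1}(\mathbf{t})$ and multiplying by $z^{-n}e^{-\xi(t^{(1)},z)}$ collapses the result to
\begin{align*}
H_n(z)-H_{n+1}(z)=zK_n(z),
\end{align*}
which is then a routine term-by-term check. Iterating gives $H_n=z\sum_{j=0}^{J-1}K_{n+j}+H_{n+J}$, and the truncation $H_{n+J}$ carries the factor $z^{-n-J+1}$, vanishing as $J\to\infty$ in the expansion in negative powers of $z$; this is precisely the direction encoded by $\iota_\Lambda=-\sum_{i\geq 0}\Lambda^i$, since $\Lambda$ acts on the $z^{-n}$ factor of $\psi_1^*$ as multiplication by $z^{-1}$. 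Hence $H_n=z\sum_{j\geq 0}K_{n+j}=-z\,\iota_\Lambda(\Lambda-1)^{-1}(K_n)$, which is \eqref{eq:fay-qr1}. The identity \eqref{eq:fay-qr2} proceeds in the same spirit from the second Fay relation of Lemma \ref{lemma:mToda}: with $G_n(z)=q_{n-1}^{(i)}(\mathbf{t}+[z]_2)\psi_2^*(n,\mathbf{t},z)$ the analogous reduction gives $G_{n+1}(z)-G_n(z)=\psi_2^*(n+1,\mathbf{t},z)q_n^{(i)}(\mathbf{t})$, and the telescoping is summed downward in $n$, with the truncation $G_{n-J}$ killed by its $z^{J-n+1}$ factor in the small-$z$ expansion dictated by $\iota_{\Lambda^{-1}}=\sum_{i\geq 1}\Lambda^{-i}$.

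For the relations \eqref{eq:fay-qr3} and \eqref{eq:fay-qr4} involving the adjoint eigenfunction $r_n^{(i)}$, I would first derive the counterparts of the two Fay identities in Lemma \ref{lemma:mToda} for the pair $(\tau_{0,n},\tau_{-1,n}^{(i)})$. The bilinear equation \eqref{eq:propbosonicbilineareq3} has the same structure as \eqref{mTodabilinear} under the replacement $(\tau_{0,n},\tau_{1,n})\mapsto(\tau_{-1,n}^{(i)},\tau_{0,n})$, so the specialisations $\mathbf{t}'-\mathbf{t}''=[\lambda^{-1}]_1$ and $\mathbf{t}'-\mathbf{t}''=[\lambda]_2$ used in the proof of Lemma \ref{lemma:mToda} produce the required $r$-side Fay identities. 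Applying the telescoping argument to these --- summed downward for \eqref{eq:fay-qr3} matching $\iota_{\Lambda^{-1}}$ and upward for \eqref{eq:fay-qr4} matching $\iota_\Lambda$ --- then yields \eqref{eq:fay-qr3} and \eqref{eq:fay-qr4}, with the truncation error now suppressed by the factor $z^{n\pm J}$ coming from the $z^{+n}$ prefactor carried by $\psi_1$ and $\psi_2$ (as opposed to $z^{-n+1}$ for $\psi_1^*$ and $\psi_2^*$).

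The main obstacle is not algebraic but conceptual: the one-step recurrences and their iteration are elementary. What requires care is justifying the passage from the one-step identity to its infinite telescoping limit, and correctly matching the expansion direction $\iota_\Lambda$ vs $\iota_{\Lambda^{-1}}$ with the $z^{\pm n}$ factor of each wave function. This is a purely formal statement, legitimised in the $z$-adic topology by the factor $z^{\mp J}$ produced by the shift $n\mapsto n\pm J$.
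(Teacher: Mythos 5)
Your proposal is correct and takes essentially the same route as the paper: the paper states the lemma follows ``by Lemma \ref{lemma:mToda}'' (with the subsequent remark fixing the choice of $\iota_{\Lambda}$ versus $\iota_{\Lambda^{-1}}$ via the $z^{\pm n}$ prefactors), and your one-step recurrences are precisely the Fay relations of Lemma \ref{lemma:mToda} rewritten in wave-function form, iterated in the $z$-adically convergent direction. Your observation that \eqref{eq:propbosonicbilineareq3} has the structure of \eqref{mTodabilinear} under $(\tau_{0,n},\tau_{1,n})\mapsto(\tau_{-1,n}^{(i)},\tau_{0,n})$, yielding the $r$-side Fay identities, correctly supplies the one detail the paper leaves implicit.
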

\begin{remark}
Recall that $\psi_j$ and $\psi_j^*$ have the form $A(z^{\pm n})$ for some operator $A=\sum_i a_i\Lambda^i$, respectively.
Then the choice of $\iota_{\Lambda}(\Lambda-1)^{-1}$ or $\iota_{\Lambda^{-1}}(\Lambda -1)^{-1}$ is determined by expansions of $z$. For instance in \eqref{eq:fay-qr1}, left hand side has the form of\ $z^{-n+1}\Big(q^{(i)}_{n}(\mathbf{t})+\mathcal{O}(z^{-1})\Big)e^{-\xi(t^{(1)},z)}
$, so we should take $\iota_{\Lambda}(\Lambda-1)^{-1}$ on the right hand side of \eqref{eq:fay-qr1} to keep the corresponding form.
\end{remark}
\begin{lemma}\label{lemma:UV}\cite{Adler1999}
Let $A(n,\Lambda)=\sum_{j}a_j(n)\Lambda^j$,
$B(n,\Lambda)=\sum_{j}b_j(n)\Lambda^j$ be two operators, then
\begin{align*}
A(n,\Lambda)B^*(n,\Lambda)
=\sum_{l\in \mathbb{Z}}{\rm Res}_zz^{-1}A(n,\Lambda)(z^{n})B(n+l,\Lambda)(z^{-n-l})\Lambda^l,
\end{align*}
by $(\sum_ia_i\La^i)(z^{-n})=\sum_ia_iz^{-(n+i)}$.
\end{lemma}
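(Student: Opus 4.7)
The plan is to establish the identity by expanding both sides and comparing the coefficients of $\Lambda^l$ for each $l\in\mathbb{Z}$. The key observation is that the residue on the right hand side is simply a bookkeeping device encoding the index shift that occurs on the left hand side when $\Lambda^i$ is commuted past $\Lambda^{-j}b_j(n)$.

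First I would unfold the left hand side. Since $B^*(n,\Lambda)=\sum_{j}\Lambda^{-j}b_j(n)$ by the definition of the formal adjoint, and since $\Lambda^k f(n)=f(n+k)\Lambda^k$ as operators, one obtains
\begin{align*}
A(n,\Lambda)B^*(n,\Lambda)=\sum_{i,j}a_i(n)\Lambda^{i-j}b_j(n)=\sum_{l\in\mathbb{Z}}\Big(\sum_{i} a_i(n)\,b_{i-l}(n+l)\Big)\Lambda^l,
\end{align*}
where the last equality follows from the substitution $l=i-j$.

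Next I would compute the right hand side. Using the stated convention $\Lambda^i(z^{\pm n})=z^{\pm n+i}$, one has $A(n,\Lambda)(z^n)=\sum_i a_i(n)z^{n+i}$ and $B(n+l,\Lambda)(z^{-n-l})=\sum_j b_j(n+l)z^{-n-l-j}$. Their product equals $\sum_{i,j}a_i(n)b_j(n+l)z^{i-j-l}$, so that ${\rm Res}_z z^{-1}$ extracts exactly the term with $i-j-l=0$, producing $\sum_i a_i(n)\,b_{i-l}(n+l)$. This is precisely the coefficient of $\Lambda^l$ obtained from the left hand side, and summing over $l$ finishes the verification.

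The only real obstacle is bookkeeping: one must remain consistent with the convention $\Lambda^k f(n)=f(n+k)\Lambda^k$ when commuting shift operators past coefficients, and one must keep track of which power of $z$ is selected by the residue. Once these conventions are fixed, the lemma reduces to the trivial change of summation variable $l=i-j$, and no content beyond formal manipulation of Laurent series is required.
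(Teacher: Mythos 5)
Your verification is correct: both sides reduce to the coefficient $\sum_i a_i(n)\,b_{i-l}(n+l)$ of $\Lambda^l$, and since the paper gives no proof of this lemma (it simply cites Adler--van Moerbeke), your direct expansion with the change of variable $l=i-j$ is exactly the standard argument intended. One notational caution: your shorthand $\Lambda^i(z^{\pm n})=z^{\pm n+i}$ should read $z^{\pm(n+i)}$ to match the paper's convention $(\sum_i a_i\Lambda^i)(z^{-n})=\sum_i a_i z^{-(n+i)}$; your subsequent computation of $B(n+l,\Lambda)(z^{-n-l})=\sum_j b_j(n+l)z^{-n-l-j}$ does use the correct sign, so this is only a slip in the displayed convention, not in the proof.
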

\begin{proposition}
Given  $q_n^{(i)}$, $r_n^{(i)}$, $\psi_j$ and $\psi_j^*$ defined in equations \eqref{eq:todawavefunction1}--\eqref{eq:q,r-tau}, we can find \eqref{eq:bosoniceq1'}--\eqref{eq:bosoniceq3'} become
\begin{align}
\bullet\quad\sum_{i=1}^mq_{n'}^{(i)}(\mathbf{t}&')
r_{n''}^{(i)}(\mathbf{t}'')
=\oint_{C_R}\frac{dz}{2\pi {\bf i}}z^{M-1}\psi_1(n',\mathbf{\mathbf{t}}',z)
\psi_1^*(n'',\mathbf{\mathbf{t}}'',z)\notag\\
&\quad \quad\quad\quad-\oint_{C_r}\frac{dz}{2\pi {\bf i}}z^{-N-1}\psi_2(n',\mathbf{t}',z)
\psi_2^*(n'',\mathbf{t}'',z),\label{eq:todaresbilinear1}\\
\bullet\quad q^{(i)}_{n'}(\mathbf{t}')
=&-\oint_{C_R}\frac{dz}{2\pi {\bf i}}z^{-1}\psi_1(n',\mathbf{t}',z)\iota_{\Lambda}(\Lambda-1)^{-1}
\Big(
\psi_1^*(n''+1,\mathbf{t}'',z)q^{(i)}_{n''}(\mathbf{t}'')
\Big)\notag\\
&+\oint_{C_r}\frac{dz}{2\pi {\bf i}}z^{-1}\psi_2(n',\mathbf{t}',z)\iota_{\Lambda^{-1}}(\Lambda-1)^{-1}
\Big(
\psi_2^*(n''+1,\mathbf{t}'',z)q^{(i)}_{n''}(\mathbf{t}'')
\Big),\label{eq:todaresbilinear2}\\
\bullet\quad r^{(i)}_{n'}(\mathbf{t}')
=&\oint_{C_R}\frac{dz}{2\pi {\bf i}}z^{-1}\psi_1^*(n',\mathbf{t}',z)\iota_{\Lambda^{-1}}(\Lambda-1)^{-1}
\Big(
\psi_1(n'',\mathbf{t}'',z)r^{(i)}_{n''+1}(\mathbf{t}'')
\Big)\notag\\
&-\oint_{C_r}\frac{dz}{2\pi {\bf i}}z^{-1}\psi_2^*(n',\mathbf{t}',z)\iota_{\Lambda}(\Lambda-1)^{-1}
\Big(
\psi_2(n'',\mathbf{t}'',z)r^{(i)}_{n''+1}(\mathbf{t}'')
\Big).\label{eq:todaresbilinear3}
\end{align}
\end{proposition}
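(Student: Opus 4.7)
The plan is to derive each of \eqref{eq:todaresbilinear1}--\eqref{eq:todaresbilinear3} from the correspondingly numbered equation in \eqref{eq:bosoniceq1'}--\eqref{eq:bosoniceq3'} by direct substitution using the Fay-type identities of Lemma \ref{lemma:lambda-qr}, followed by an algebraic rearrangement. For \eqref{eq:todaresbilinear1} there is nothing to do beyond transposing the $\oint_{C_r}$ integral in \eqref{eq:bosoniceq1'} to the opposite side of the equality; the integrands are already in the required form.

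For \eqref{eq:todaresbilinear2} I would start from \eqref{eq:bosoniceq2'} and isolate $q^{(i)}_{n'}(\mathbf{t}')$. The two remaining integrands carry the shifted products $q^{(i)}_{n''}(\mathbf{t}''+[z^{-1}]_1)\psi^*_1(n'',\mathbf{t}'',z)$ and $q^{(i)}_{n''-1}(\mathbf{t}''+[z]_2)\psi^*_2(n'',\mathbf{t}'',z)$, which are precisely the left-hand sides of \eqref{eq:fay-qr1} and \eqref{eq:fay-qr2}. Substituting their right-hand sides in, the factor $-z$ produced by \eqref{eq:fay-qr1} turns the $z^{-2}$ weight into $-z^{-1}$ in the large-contour integral, while \eqref{eq:fay-qr2} contributes a $+1$ and preserves the $z^{-1}$ weight in the small-contour integral, reproducing the signs displayed in \eqref{eq:todaresbilinear2}.

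For \eqref{eq:todaresbilinear3} I would first relabel the free variables $(n',\mathbf{t}') \leftrightarrow (n'',\mathbf{t}'')$ in \eqref{eq:bosoniceq3'} so that $r^{(i)}_{n'}(\mathbf{t}')$ sits alone on one side, then apply \eqref{eq:fay-qr3} and \eqref{eq:fay-qr4} to the $r$-shifted products in the two integrands. The factor $+z$ from \eqref{eq:fay-qr3} converts $z^{-2}$ into $z^{-1}$ in the large-contour term, while the factor $-1$ from \eqref{eq:fay-qr4} flips the sign of the small-contour term to match the minus in \eqref{eq:todaresbilinear3}; the only remaining step is to pull $\psi^*_1(n',\mathbf{t}',z)$ and $\psi^*_2(n',\mathbf{t}',z)$ in front (they commute with everything else, depending on different variables).

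The step requiring the most care, rather than constituting a genuine obstacle, is the consistent selection of $\iota_{\Lambda}(\Lambda-1)^{-1}$ versus $\iota_{\Lambda^{-1}}(\Lambda-1)^{-1}$ at each Fay-type substitution. This choice is dictated by whether the surrounding factor is expanded at $z=\infty$ (along $C_R$) or at $z=0$ (along $C_r$), and the Remark immediately following Lemma \ref{lemma:lambda-qr} has already fixed the correct convention in each of the four identities, so no independent analysis is needed and the proof reduces to transcription.
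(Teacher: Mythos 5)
Your proposal is correct and takes essentially the same route as the paper, which states this Proposition as the immediate consequence of substituting the Fay-type identities \eqref{eq:fay-qr1}--\eqref{eq:fay-qr4} of Lemma \ref{lemma:lambda-qr} into \eqref{eq:bosoniceq1'}--\eqref{eq:bosoniceq3'} and rearranging (Lemma \ref{lemma:UV} only enters later, in the Theorem). Your sign bookkeeping checks out in detail: the factor $-z$ from \eqref{eq:fay-qr1} and the factor $+1$ from \eqref{eq:fay-qr2} yield \eqref{eq:todaresbilinear2} after isolating $q^{(i)}_{n'}(\mathbf{t}')$, while the relabeling $(n',\mathbf{t}')\leftrightarrow(n'',\mathbf{t}'')$ followed by the factors $+z$ from \eqref{eq:fay-qr3} and $-1$ from \eqref{eq:fay-qr4} yields \eqref{eq:todaresbilinear3}, with \eqref{eq:todaresbilinear1} being a trivial transposition of \eqref{eq:bosoniceq1'}.
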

After the preparation above, let us introduce operators $W_a,\ S_a\ (a=1,2)$ as follows,
\begin{align*}
&W_1(n,\mathbf{t},\Lambda)
=S_1(n,\mathbf{t},\Lambda)e^{\xi(t^{(1)},\Lambda)},\quad\ \ \
\widetilde{W}_1(n,\mathbf{t},\Lambda)
=\widetilde{S}_1(n,\mathbf{t},\Lambda)e^{-\xi(t^{(1)},\Lambda^{-1})},\\
&W_2(n,\mathbf{t},\Lambda)
=S_2(n,\mathbf{t},\Lambda)e^{\xi(t^{(2)},\Lambda^{-1})},\quad
\widetilde{W}_2(n,\mathbf{t},\Lambda)
=\widetilde{S}_2(n,\mathbf{t},\Lambda)e^{-\xi(t^{(2)},\Lambda)},
\end{align*}
with
\begin{align*}
&S_1(n,\mathbf{t},\Lambda)
=1+\sum_{j=1}^{+\infty}\omega_j\Lambda^{-j},\quad
\widetilde{S}_1(n,\mathbf{t},\Lambda)
=\Lambda^{-1}+\sum_{j=1}^{+\infty}\widetilde{\omega}_j\Lambda^{j-1},\\
&S_2(n,\mathbf{t},\Lambda)
=\nu_0+\sum_{j=1}^{+\infty}\nu_j\Lambda^{j},\quad\
\widetilde{S}_2(n,\mathbf{t},\Lambda)
=\sum_{j=1}^{+\infty}\widetilde{\nu}_j\Lambda^{-j-1},
\end{align*}
satisfying
$
\psi_a(n,\mathbf{t},z)=W_a(n,\mathbf{t},\Lambda)(z^{n}),\
\psi_a^*(n,\mathbf{t},z)=\widetilde{W}_a(n,\mathbf{t},\Lambda)(z^{-n}).
$
Next if define
\begin{align*}
L_1(n)=S_1\Lambda S_1^{-1}, \quad L_2(n)=S_2\Lambda^{-1}S_2^{-1},
\end{align*}
we have the following theorem.

\begin{theorem}
Given bilinear equations  \eqref{eq:todaresbilinear1}--\eqref{eq:todaresbilinear3}, we have
\begin{align}
&L_1^M(n)=L_2^{N}(n)+\sum_{j\in \mathbb Z}\sum_{i=1}^{m}q^{(i)}_n\Lambda^jr^{(i)}_{n+1},
\label{eq:ckp-lax-operator1'}\quad M,\ N\geq1,
\\
&\pa_{t^{(1)}_p}L_a(n)=[(L_1^p(n))_{\geq0},L_a(n)],\quad\ \pa_{t^{(2)}_p}L_a(n)=[(L_2^p(n))_{<0},L_a(n)],\quad a=1,2,
\label{eq:ckp-lax-operator2'}
\\
&\partial_{ t_{p}^{(1)}}q^{(i)}_n
=\Big(L_1^{p}(n)\Big)_{\geq0}\Big(q^{(i)}_n\Big),\quad\ \quad\quad
\partial _{ t_{p}^{(2)}}q^{(i)}_n=\Big(L_2^{p}(n)\Big)_{<0}\Big(q^{(i)}_n\Big),\label{eq:ckp-lax-operator3'}\\
&\partial_{ t_{p}^{(1)}}r^{(i)}_n
=-\Big(\big(L_1^{p}(n-1)\big)_{\geq0}\Big)^*\left(r^{(i)}_n\right),\quad
\partial _{ t_{p}^{(2)}}r^{(i)}_n=-\Big(\big(L_2^{p}(n-1)\big)_{<0}\Big)^*\left(r^{(i)}_n\right).\label{eq:ckp-lax-operator4'}
\end{align}

\begin{proof}
Firstly apply $\Lambda-1$\ on both sides of equation \eqref{eq:todaresbilinear2}\ with respect to $n''$,
\begin{align*}
\oint_{C_R}\frac{dz}{2\pi {\bf i}}z^{-1}\psi_1(n',\mathbf{t}',z)
\psi^*_1(n''+1,\mathbf{t}'',z)
=\oint_{C_r}\frac{dz}{2\pi {\bf i}}z^{-1}\psi_2(n',\mathbf{t}',z)
\psi^*_2(n''+1,\mathbf{t}'',z),
\end{align*}
which is bilinear equation of Toda hierarchy\cite{Ueno1984,Takasaki2018}. Therefore by Lemma \ref{lemma:UV}, we can get
$\widetilde{S}_a^*=S_a^{-1}\cdot \Lambda\ (a=1,2)$\ and
\begin{align}
&\pa_{t^{(1)}_p}S_1=-(S_1\Lambda^p S_1^{-1})_{<0}S_1,\ \
\pa_{t^{(2)}_p}S_1=(S_2\Lambda^{-p} S_2^{-1})_{<0}S_1,\label{eq:partialS1}\\
&\pa_{t^{(1)}_p}S_2=(S_1\Lambda^p S_1^{-1})_{\geq0}S_2, \ \ \ \
\pa_{t^{(2)}_p}S_2=-(S_2\Lambda^{-p} S_2^{-1})_{\geq0}S_2.\label{eq:partialS2}
\end{align}
Based upon these relations, one can easily obtain \eqref{eq:ckp-lax-operator2'}. Notice that by \eqref{eq:partialS1} and \eqref{eq:partialS2}, we have
\begin{align*}
&\partial _{ t_{p}^{(1)}}\psi_a(n,{\bf t},z)=\Big(L_1^p(n)\Big)_{\geq0}\Big(\psi_a(n,{\bf t},z)\Big),\quad
\partial _{ t_{p}^{(1)}}\psi_a^*(n,{\bf t},z)=-\Big(\big(L_1^p(n-1)\big)_{\geq0}\Big)^*\Big(\psi_a^*(n,{\bf t},z)\Big),\\
&\partial _{ t_{p}^{(2)}}\psi_a(n,{\bf t},z)=\Big(L_2^p(n)\Big)_{\geq0}\Big(\psi_a(n,{\bf t},z)\Big),\quad
\partial _{ t_{p}^{(2)}}\psi_a^*(n,{\bf t},z)=-\Big(\big(L_2^p(n-1)\big)_{<0}\Big)^*\Big(\psi_a^*(n,{\bf t},z)\Big).
\end{align*}
Based upon above, we can know that \eqref{eq:ckp-lax-operator3'} and \eqref{eq:ckp-lax-operator4'} hold by \eqref{eq:todaresbilinear2},
\eqref{eq:todaresbilinear3}.

Finally by Lemma \ref{lemma:UV}, \eqref{eq:todaresbilinear1}\ can be written into
\begin{align*}
&S_1(n,\mathbf{t}',\Lambda)\Lambda^Me^{\xi_1(t'-t'',\Lambda)}
\widetilde{S}_1^*(n,\mathbf{t}'',\Lambda)\\
=&S_2(n,\mathbf{t}',\Lambda)\Lambda^{-N}e^{\xi_2(t'-t'',\Lambda^{-1})}
\widetilde{S}_2^*(n,\mathbf{t}'',\Lambda)
+\sum_{j\in \mathbb{Z}}\sum_{i=1}^{m}q^{(i)}_{n}(\mathbf{t}')r^{(i)}_{n+j}(\mathbf{t}'')\Lambda^{j}.
\end{align*}
After substituting $\widetilde{S}_a^*=S_a^{-1}\cdot \Lambda\ (a=1,2)$\ into the above equation and letting ${\mathbf t}'={\mathbf t}''={\mathbf t}$,
\begin{align*}
S_1(n,\mathbf{t},\Lambda)\Lambda^MS_1(n,\mathbf{t},\Lambda)^{-1}=S_2(n,\mathbf{t},\Lambda)\Lambda^{-N}S_2(n,\mathbf{t},\Lambda)^{-1}
+\sum_{j\in \mathbb{Z}}\sum_{i=1}^{m}q^{(i)}_{n}(\mathbf{t})r^{(i)}_{n+j}(\mathbf{t})\Lambda^{j-1},
\end{align*}
which implies
\eqref{eq:ckp-lax-operator1'}\ holds.
\end{proof}
\end{theorem}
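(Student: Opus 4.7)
The plan is to first reduce the constrained bilinear system to the standard two--component Toda framework, so that the wave operators $S_1,\ S_2$ obey the familiar Sato--Wilson equations, and then extract the three remaining pieces of information---the Lax flows \eqref{eq:ckp-lax-operator2'}, the $(q,r)$-flows \eqref{eq:ckp-lax-operator3'}--\eqref{eq:ckp-lax-operator4'}, and the constraint \eqref{eq:ckp-lax-operator1'}---one at a time.

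First I would apply $\Lambda-1$ in the variable $n''$ to \eqref{eq:todaresbilinear2}. Since $q^{(i)}_{n'}(\mathbf{t}')$ is independent of $n''$, the difference operator annihilates it and one is left with the standard bilinear identity of the two--component Toda hierarchy. Lemma \ref{lemma:UV} then converts this residue identity into an operator identity that simultaneously yields the adjoint relation $\widetilde{S}_a^{*}=S_a^{-1}\cdot\Lambda$ and the Sato--type flows \eqref{eq:partialS1}--\eqref{eq:partialS2}, from which \eqref{eq:ckp-lax-operator2'} drops out by differentiating $L_a=S_a\Lambda^{\pm 1}S_a^{-1}$. As a by--product, one obtains the standard evolution laws $\partial_{t^{(a)}_p}\psi_b=(L_a^p(n))_P\psi_b$ and $\partial_{t^{(a)}_p}\psi_b^{*}=-((L_a^p(n-1))_P)^{*}\psi_b^{*}$ for the appropriate projection~$P$.

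For \eqref{eq:ckp-lax-operator3'}--\eqref{eq:ckp-lax-operator4'}, the next step is to differentiate \eqref{eq:todaresbilinear2} along each flow $t^{(a)}_p$, substitute the wave--function evolutions just recorded under the residue integrals, and then compare with the original bilinear identity; the difference isolates $\partial_{t^{(a)}_p}q^{(i)}_{n'}(\mathbf{t}')$ as precisely the stated projection of $L_a^p$ applied to $q^{(i)}_n$. The symmetric argument applied to \eqref{eq:todaresbilinear3} produces \eqref{eq:ckp-lax-operator4'}.

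The constraint \eqref{eq:ckp-lax-operator1'} would be read off from \eqref{eq:todaresbilinear1} via Lemma \ref{lemma:UV}: the two residue integrals turn into $S_1(n,\mathbf{t}',\Lambda)\Lambda^{M}e^{\xi(t^{(1)'}-t^{(1)''},\Lambda)}\widetilde{S}_1^{*}(n,\mathbf{t}'',\Lambda)$ and $S_2(n,\mathbf{t}',\Lambda)\Lambda^{-N}e^{\xi(t^{(2)'}-t^{(2)''},\Lambda^{-1})}\widetilde{S}_2^{*}(n,\mathbf{t}'',\Lambda)$, while the rank--one pairing becomes $\sum_{j\in\mathbb{Z}}\sum_{i=1}^{m}q^{(i)}_n(\mathbf{t}')r^{(i)}_{n+j}(\mathbf{t}'')\Lambda^{j-1}$. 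Substituting $\widetilde{S}_a^{*}=S_a^{-1}\Lambda$ and setting $\mathbf{t}'=\mathbf{t}''=\mathbf{t}$ collapses the exponential kernels to the identity, and a reindexing of the sum produces exactly \eqref{eq:ckp-lax-operator1'}. The main obstacle is precisely this last bookkeeping step: keeping track of the $\Lambda$-shifts produced by $\widetilde{S}_a^{*}=S_a^{-1}\Lambda$ together with the reindexing in $j$, so that the rank--one tail carries the labels $q^{(i)}_n\Lambda^j r^{(i)}_{n+1}$ required by the constraint. Once that identification is secured, everything else is a direct consequence of the Sato--theoretic machinery already assembled.
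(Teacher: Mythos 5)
Your proposal is correct and follows the paper's own proof essentially verbatim: apply $\Lambda-1$ in $n''$ to \eqref{eq:todaresbilinear2} (annihilating the $n''$-independent term $q^{(i)}_{n'}(\mathbf{t}')$) to recover the two--component Toda bilinear identity, invoke Lemma \ref{lemma:UV} to obtain $\widetilde{S}_a^{*}=S_a^{-1}\Lambda$ and the Sato--Wilson flows \eqref{eq:partialS1}--\eqref{eq:partialS2} (hence \eqref{eq:ckp-lax-operator2'}), derive \eqref{eq:ckp-lax-operator3'}--\eqref{eq:ckp-lax-operator4'} by differentiating \eqref{eq:todaresbilinear2}--\eqref{eq:todaresbilinear3} using the wave--function evolutions, and extract \eqref{eq:ckp-lax-operator1'} from \eqref{eq:todaresbilinear1} via Lemma \ref{lemma:UV}. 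The only blemish is cosmetic: Lemma \ref{lemma:UV} produces the rank--one tail as $\sum_{j}\sum_{i}q^{(i)}_{n}(\mathbf{t}')r^{(i)}_{n+j}(\mathbf{t}'')\Lambda^{j}$, and the exponent $j-1$ you wrote arises only after substituting $\widetilde{S}_a^{*}=S_a^{-1}\Lambda$ and stripping the rightmost $\Lambda$ --- precisely the reindexing you yourself flag as the delicate bookkeeping step.
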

\begin{remark}
Notice that \eqref{eq:ckp-lax-operator1'} is equivalent to the following three relations
\begin{align}
&\Big(L_1^M(n)\Big)_{\leq -N-1}=\sum_{i=1}^{m}q^{(i)}_n\cdot \Lambda^{-N}\Delta^{-1} \cdot r^{(i)}_{n+1},\label{L1mdkp}\\
&\Big(L_2^{N}(n)\Big)_{\geq M+1}=-\sum_{i=1}^{m}q^{(i)}_n\cdot \Lambda^M\Delta^{*-1}\cdot r^{(i)}_{n+1},\label{L2mdkp}\\
&\Big(L_1^M(n)\Big)_{\geq -N}=\Big(L_2^{N}(n)\Big)_{\leq M}+\sum_{j=-N}^M\sum_{i=1}^{m}q^{(i)}_n\Lambda^jr^{(i)}_{n+1},\label{L12mdkp}
\end{align}
where $\Delta=\Lambda-1$, $\Delta^{-1}=\sum_{j\geq 1}\Lambda^{-j}$,  $\Delta^{*-1}=\sum_{j\geq 1}\Lambda^{j}$. Here \eqref{L1mdkp} can be viewed as the constraint of discrete KP hierarchy in \cite{Oevel1996}. If assume $$
L_1^M(n)=\Lambda^M +\sum_{j=-\infty}^{M-1}\widetilde{a}_j\Lambda^{j},
\quad
L_2^N(n)=\sum_{j=-\infty}^{N}\widetilde{b}_j\Lambda^{-j},$$
then we can find that only $M+N+2$ functions:
$$\widetilde{a}_i(n) (0\leq i\leq M-1), \quad \widetilde{b}_i(n) (1\leq i\leq N), \quad q_n,\quad r_n$$
are independent in the whole system of $(M,N,m)$--GBTH \eqref{eq:ckp-lax-operator1'}--\eqref{eq:ckp-lax-operator4'}.
\end{remark}
\begin{example}
When $M=N=m=1$,\ \eqref{eq:ckp-lax-operator1'}\ becomes
\begin{align*}
L_1(n)=L_2(n)+\sum_{j\in \mathbb{Z}}q_n\Lambda^{j}r_{n+1}.
\end{align*}
Recall that $L_1(n)=\Lambda+\sum_{i=1}^\infty a_i(n)\Lambda^{-i}$ and\ $L_2(n)=\sum_{i=-1}^\infty b_i(n)\Lambda^i$. Then we can find
\begin{align*}
&b_1(n)=1-q_nr_{n+2},\quad
b_0(n)=a_0(n)-q_nr_{n+1},\quad
b_{-1}(n)=a_1(n)-q_nr_n,\\
& a_{j+1}(n)=q_nr_{n-j},\quad
b_{j+1}(n)=-q_nr_{n+j+2},\quad j\geq1.
\end{align*}
So in this case only
$q_n,\ r_n,\ a_0(n),\ b_{-1}(n)$ are independent in $(1,1,1)$--GBTH.
Further by \eqref{eq:ckp-lax-operator2'},\  \eqref{eq:ckp-lax-operator3'} and \eqref{eq:ckp-lax-operator4'}, we can get
\begin{align*}
&\partial_{ t_{1}^{(1)}}q_n
=q_{n+1}+a_0(n)q_n,\quad
\partial_{ t_{1}^{(1)}}a_0
=b_{-1}(n+1)-b_{-1}(n)+q_{n+1}r_{n+1}-q_{n}r_{n},\\
&\partial _{t_{1}^{(1)}}r_n
=-r_{n-1}-a_0(n-1)r_n,
\quad
\partial_{ t_{1}^{(1)}}b_{-1}(n)
=b_{-1}(n)\Big(a_0(n)-a_0(n-1)\Big),\\
&\partial_{ t_{1}^{(2)}}q_n
=b_{-1}(n)q_{n-1},
\quad
\partial_{ t_{1}^{(2)}}a_0
=b_{-1}(n)-b_{-1}(n+1),\\
&\partial_{ t_{1}^{(2)}}r_n
=-b_{-1}(n)r_{n+1},
\quad
\partial_{ t_{1}^{(2)}}b_{-1}(n)
=b_{-1}(n)\Big(a_0(n-1)-a_0(n)+q_nr_{n+1}-q_{n-1}r_{n}\Big).
\end{align*}
Notice that when $b_{-1}(n)=-q_n r_n$, we can find above relations are consistent with (7.8)--(7.12) in \cite{Konopelchenko1992}.
\end{example}

\section{From Lax structure to bilinear equations}
In this section, we will start from Lax structure of $(M,N,m)$--GBTH below and derive the corresponding bilinear equations \eqref{eq:todaresbilinear1}--\eqref{eq:todaresbilinear3}.
\begin{align}
&L_1^M(n)=L_2^{N}(n)+\sum_{j\in \mathbb Z}\sum_{i=1}^{m}q^{(i)}_n\Lambda^jr^{(i)}_{n+1},\quad M,\  N\geq 1,
\label{eq:ckp-lax-operator1''}
\\
&\pa_{t^{(1)}_p}L_a(n)=[(L_1^p(n))_{\geq0},L_a(n)],\quad\ \pa_{t^{(2)}_p}L_a(n)=[(L_2^p(n))_{<0},L_a(n)],\quad a=1,2,
\label{eq:ckp-lax-operator2''}
\\
&\partial_{ t_{p}^{(1)}}q^{(i)}_n
=\Big(L_1^{p}(n)\Big)_{\geq0}\Big(q^{(i)}_n\Big),\quad\quad\ \quad
\partial _{ t_{p}^{(2)}}q^{(i)}_n=\Big(L_2^{p}(n)\Big)_{<0}\Big(q^{(i)}_n\Big),\label{eq:ckp-lax-operator3''}\\
&\partial_{ t_{p}^{(1)}}r^{(i)}_n
=-\Big(\big(L_1^{p}(n-1)\big)_{\geq0}\Big)^*\left(r^{(i)}_n\right),\quad
\partial _{ t_{p}^{(2)}}r^{(i)}_n=-\Big(\big(L_2^{p}(n-1)\big)_{<0}\Big)^*\left(r^{(i)}_n\right).\label{eq:ckp-lax-operator4''}
\end{align}
Notice that this system is well defined, since\
$
\partial_{t_p}\Big(L_1^M(n)-L_2^N(n)\Big)
=\partial_{t_p}\Big(\sum_{j\in \mathbb Z}\sum_{i=1}^{m}q^{(i)}_n\Lambda^jr^{(i)}_{n+1}\Big),
$\
which can be derived by following formulas for $A=\sum_{j}a_j(n)\Lambda^j$,
\begin{align*}
A\cdot \sum_{l\in \mathbb{Z}}f\Lambda ^lg
=\sum_{l\in \mathbb{Z}}A(f)\cdot \Lambda ^l\cdot g,\quad
\sum_{l\in \mathbb{Z}}f\Lambda ^lg\cdot A
=\sum_{l\in \mathbb{Z}}f\cdot\Lambda ^l\cdot A^*(g).
\end{align*}

Firstly let us prove  \eqref{eq:todaresbilinear2}. Notice that
$
f(n+j,{\mathbf t})=\Lambda^j\Big(f(n,{\mathbf t})\Big)
$\
and
\begin{align*}
&\partial_{ t_{p}^{(1)}}q^{(i)}_n
=\Big(L_1^{p}(n)\Big)_{\geq0}\Big(q^{(i)}_n\Big),\quad
\partial _{ t_{p}^{(1)}}\psi_a(n,{\bf t},z)=\Big(L_1^p(n)\Big)_{\geq0}\Big(\psi_a(n,{\bf t},z)\Big),
\\
&\partial _{ t_{p}^{(2)}}q^{(i)}_n
=\Big(L_2^{p}(n)\Big)_{<0}\Big(q^{(i)}_n\Big),
\quad
\partial _{ t_{p}^{(2)}}\psi_a(n,{\bf t},z)=\Big(L_2^p(n)\Big)_{\geq0}\Big(\psi_a(n,{\bf t},z)\Big),\quad a=1,2.
\end{align*}
Therefore there exists $A_{\alpha}(n,\mathbf{t}, \Lambda)=\sum_{p=-s_\alpha}^{l_\alpha}a_{\alpha,p}(n,\mathbf{t})\Lambda^p$ such that
\begin{align*}
&q^{(i)}_{n+j}({\mathbf t}')
=\sum_{\alpha\geq0}\frac{({\mathbf t}'-{\mathbf t})^\alpha}{\alpha!}A_\alpha (n,\mathbf{t}, \Lambda)\Big(q^{(i)}_{n}({\mathbf t})\Big),\\
&\psi_a(n+j,\mathbf{t}',z)
=\sum_{\alpha_{\geq 0}}\frac{({\mathbf t}'-{\mathbf t})^\alpha}{\alpha!}A_\alpha (n,\mathbf{t}, \Lambda)\Big(\psi_a(n,\mathbf{t},z)\Big),
\quad a=1,2,
\end{align*}
where $\alpha=(\alpha^{(1)},\alpha^{(2)})$,\  $\mathbf{t}^\alpha=(t^{(1)})^{\alpha^{(1)}}(t^{(2)})^{\alpha^{(2)}}$ and $\alpha^{(b)}=(\alpha^{(b)}_1,\alpha^{(b)}_2,\cdots)$,\
$(t^{(b)})^{\alpha^{(b)}}=\Pi_{l=1}^{+\infty}(t^{(b)})^{\alpha_l^{(b)}},\ b=1,2$.
Hence if \eqref{eq:todaresbilinear2} holds when ${\mathbf t}'={\mathbf t}$,  then \eqref{eq:todaresbilinear2}\ can also hold for general ${\mathbf t}$ and ${\mathbf t}'$. Now let us concentrate on the proof of case ${\mathbf t}'={\mathbf t}$\ in \eqref{eq:todaresbilinear2}, that is,
\begin{align}\label{eq:todaresbilinear2equivalent2}
q^{(i)}_{n}({\mathbf t})
=\oint_{C_R}\frac{dz}{2\pi {\bf i}}
z^{-1}\psi_1(n,{\mathbf t},z)\rho _1(n+j,{\mathbf t},z)
+\oint_{C_r}\frac{dz}{2\pi {\bf i}}
z^{-1}\psi_2(n,{\mathbf t},z)\rho _2(n+j,{\mathbf t},z),\quad j\in\mathbb{Z},
\end{align}
where
$$
\rho_1(n,{\mathbf t},z)
=-\iota_{\Lambda^{-1}}(\Lambda-1)^{-1}\Big(\psi_1^*(n+1,\mathbf{t},z)q^{(i)}_{n}(\mathbf{t})\Big),\ \rho_2(n,{\mathbf t},z)
=\iota_{\Lambda}(\Lambda-1)^{-1}
\Big(\psi_2^*(n+1,\mathbf{t},z)q^{(i)}_{n}(\mathbf{t})\Big).\
$$
Further notice that \eqref{eq:todaresbilinear2equivalent2} is equivalent to
\begin{align}\label{eq:todaresbilinear2equivalent3}
q^{(i)}_{n}({\mathbf t})\sum_{j\in \mathbb{Z}}\Lambda^j
=\sum_{j\in \mathbb{Z}}\oint_{C_R}\frac{dz}{2\pi {\bf i}}
z^{-1}\psi_1(n,{\mathbf t},z)\rho _1(n+j,{\mathbf t},z)\Lambda^j
+\sum_{j\in \mathbb{Z}}\oint_{C_r}\frac{dz}{2\pi {\bf i}}
z^{-1}\psi_2(n,{\mathbf t},z)\rho _2(n+j,{\mathbf t},z)\Lambda^j.
\end{align}
So by Lemma \ref{lemma:UV} and
\begin{align*}
&\psi_1(n,{\mathbf t},z)
=\Big(S_1(n,{\mathbf t},\Lambda)e^{\xi(t^{(1)},\Lambda)}\Big)(z^n),\quad \psi_1^*(n+1,{\mathbf t},z)
=\Big(S_1^{-1}(n,{\mathbf t},\Lambda)^*e^{-\xi(t^{(1)},\Lambda^{-1})}\Big)(z^{-n}),\\
&\psi_2(n,{\mathbf t},z)
=\Big(S_2(n,{\mathbf t},\Lambda)e^{\xi(t^{(2)},\Lambda^{-1})}\Big)(z^{n}),\quad \psi_2^*(n+1,{\mathbf t},z)
=\Big(S_2^{-1}(n,{\mathbf t},\Lambda)^*e^{-\xi(t^{(2)},\Lambda)}\Big)(z^{-n}),
\end{align*}
we have
\begin{align*}
\text {RHS of \eqref{eq:todaresbilinear2equivalent3}}=q^{(i)}_{n}({\mathbf t})\Big(-\iota_{\Lambda^{-1}}(\Lambda^{-1}-1)^{-1}
+\iota_{\Lambda}(\Lambda^{-1}-1)^{-1}\Big)=\text {LHS of \eqref{eq:todaresbilinear2equivalent3}}.
\end{align*}
Similarly, we can prove \eqref{eq:todaresbilinear3}.

Now let us prove \eqref{eq:todaresbilinear1}.  Notice that \eqref{eq:ckp-lax-operator2''}\ is Lax equation of Toda hierarchy, so the wave function $\psi_a$ and adjoint wave function $\psi^*_a$ satisfy the following bilinear equation,
\begin{align}
\oint_{C_R}\frac{dz}{2\pi {\bf i}}z^{-1}\psi_1(n,\mathbf{t},z)
\psi^*_1(n',\mathbf{t}',z)
=\oint_{C_r}\frac{dz}{2\pi {\bf i}}z^{-1}\psi_2(n,\mathbf{t},z)
\psi^*_2(n',\mathbf{t}',z).\label{eq:Todabilinearwave}
\end{align}
Recall that
$$
L_1(n,\mathbf{t},\Lambda)\Big(\psi_1(n,\mathbf{t},z)\Big)=z\cdot\psi_1(n,\mathbf{t},z),\  L_2(n,\mathbf{t},\Lambda)\Big(\psi_2(n,\mathbf{t},z)\Big)=z^{-1}\cdot\psi_2(n,\mathbf{t},z),
$$
so if applying
$$L_1^M(n,\mathbf{t},\Lambda)=L_2^{N}(n,\mathbf{t},\Lambda)+\sum_{j\in \mathbb Z}\sum_{i=1}^{m}q^{(i)}_n(\mathbf{t})\Lambda^jr^{(i)}_{n+1}(\mathbf{t})$$ on  \eqref{eq:Todabilinearwave}, we can get
\begin{align}
&\oint_{C_R}\frac{dz}{2\pi {\bf i}}z^{M-1}\psi_1(n,\mathbf{t},z)\psi^*_1(n',\mathbf{t}',z)
-\oint_{C_r}\frac{dz}{2\pi {\bf i}}z^{-N-1}\psi_2(n,\mathbf{t},z)
\psi^*_2(n',\mathbf{t}',z)\nonumber\\
=&\sum_{j\in \mathbb Z}\sum_{i=1}^{m}
q^{(i)}_n(\mathbf{t})r^{(i)}_{n+j+1}(\mathbf{t})
\oint_{C_r}\frac{dz}{2\pi {\bf i}}z^{-1}
\psi_2(n+j,\mathbf{t},z)
\psi^*_2(n',\mathbf{t}',z).\label{eq:Todabilinearwave'}
\end{align}
If denote right hand side of \eqref{eq:Todabilinearwave'}\ as $A(n,n',{\mathbf t},{\mathbf t}')$, then by \eqref{eq:Todabilinearwave}
\begin{align*}
A(n,n',{\mathbf t},{\mathbf t}')
=&\sum_{j= -\infty}^{-1}\sum_{i=1}^{m}
q^{(i)}_n(\mathbf{t})r^{(i)}_{n+j+1}(\mathbf{t})
\oint_{C_R}\frac{dz}{2\pi {\bf i}}z^{-1}
\psi_1(n+j,\mathbf{t},z)
\psi^*_1(n',\mathbf{t}',z)\\
&+\sum_{j= 0}^{+\infty}\sum_{i=1}^{m}
q^{(i)}_n(\mathbf{t})r^{(i)}_{n+j+1}(\mathbf{t})
\oint_{C_r}\frac{dz}{2\pi {\bf i}}z^{-1}
\psi_2(n+j,\mathbf{t},z)
\psi^*_2(n',\mathbf{t}',z).
\end{align*}
Next according to
$\iota_{\Lambda}(\Lambda-1)^{-1}=-\sum_{j=0}^{+\infty}\Lambda^j,\
\iota_{\Lambda^{-1}}(\Lambda-1)^{-1}=\sum_{j=1}^{+\infty}\Lambda^{-j},
$ we can find by  \eqref{eq:todaresbilinear3}
\begin{align*}
A(n,n',{\mathbf t},{\mathbf t}')
=&\sum_{i=1}^{m}q^{(i)}_n(\mathbf{t})
\oint_{C_R}\frac{dz}{2\pi {\bf i}}z^{-1}\psi_1^*(n',\mathbf{t}',z)\iota_{\Lambda^{-1}}(\Lambda-1)^{-1}
\Big(
\psi_1(n,\mathbf{t},z)r^{(i)}_{n+1}(\mathbf{t})
\Big)\nonumber\\
&-\sum_{i=1}^{m}q^{(i)}_n(\mathbf{t})\oint_{C_r}\frac{dz}{2\pi {\bf i}}z^{-1}\psi_2^*(n',\mathbf{t}',z)\iota_{\Lambda}(\Lambda-1)^{-1}
\Big(
\psi_2(n,\mathbf{t},z^{-1})r^{(i)}_{n+1}(\mathbf{t})
\Big)\\
=&\sum_{i=1}^{m}q^{(i)}_n(\mathbf{t})r^{(i)}_{n'}(\mathbf{t}'),
\end{align*}
which means\ \eqref{eq:todaresbilinear1}\ holds.

Finally, let us summarize above results into the following theorem.
\begin{theorem}
Given Lax structure of GBTH \eqref{eq:ckp-lax-operator1''}--\eqref{eq:ckp-lax-operator4''},
\begin{align*}
\bullet\quad\sum_{i=1}^mq_{n'}^{(i)}(\mathbf{t}&')
r_{n''}^{(i)}(\mathbf{t}'')
=\oint_{C_R}\frac{dz}{2\pi {\bf i}}z^{M-1}\psi_1(n',\mathbf{\mathbf{t}}',z)
\psi_1^*(n'',\mathbf{\mathbf{t}}'',z)\\
&\quad \quad\quad\quad-\oint_{C_r}\frac{dz}{2\pi {\bf i}}z^{-N-1}\psi_2(n',\mathbf{t}',z)
\psi_2^*(n'',\mathbf{t}'',z),\\
\bullet\quad q^{(i)}_{n'}(\mathbf{t}')
=&-\oint_{C_R}\frac{dz}{2\pi {\bf i}}z^{-1}\psi_1(n',\mathbf{t}',z)\iota_{\Lambda}(\Lambda-1)^{-1}
\Big(
\psi_1^*(n''+1,\mathbf{t}'',z)q^{(i)}_{n''}(\mathbf{t}'')
\Big)\\
&+\oint_{C_r}\frac{dz}{2\pi {\bf i}}z^{-1}\psi_2(n',\mathbf{t}',z)\iota_{\Lambda^{-1}}(\Lambda-1)^{-1}
\Big(
\psi_2^*(n''+1,\mathbf{t}'',z)q^{(i)}_{n''}(\mathbf{t}'')
\Big),\\
\bullet\quad r^{(i)}_{n'}(\mathbf{t}')
=&\oint_{C_R}\frac{dz}{2\pi {\bf i}}z^{-1}\psi_1^*(n',\mathbf{t}',z)\iota_{\Lambda^{-1}}(\Lambda-1)^{-1}
\Big(
\psi_1(n'',\mathbf{t}'',z)r^{(i)}_{n''+1}(\mathbf{t}'')
\Big)\\
&-\oint_{C_r}\frac{dz}{2\pi {\bf i}}z^{-1}\psi_2^*(n',\mathbf{t}',z)\iota_{\Lambda}(\Lambda-1)^{-1}
\Big(
\psi_2(n'',\mathbf{t}'',z)r^{(i)}_{n''+1}(\mathbf{t}'')
\Big).
\end{align*}

\end{theorem}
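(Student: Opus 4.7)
The plan is to invert the argument of Section 2: starting from the Lax structure \eqref{eq:ckp-lax-operator1''}--\eqref{eq:ckp-lax-operator4''}, I will produce the three bilinear identities one after another. The common engine throughout is Lemma \ref{lemma:UV}, which translates operator identities into residue identities on wave functions, together with the standard dressing representations $\psi_a(n,\mathbf{t},z)=(S_a e^{\xi})(z^{n})$ and $\psi_a^*(n,\mathbf{t},z)=(\widetilde{S}_a e^{-\xi})(z^{-n})$.

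First I would dispatch the auxiliary Toda bilinear equation \eqref{eq:Todabilinearwave}. The Lax equations \eqref{eq:ckp-lax-operator2''} alone make $(L_1(n),L_2(n))$ a Lax pair of the 2--Toda hierarchy, so the standard dressing argument (equivalently, \eqref{eq:partialS1}--\eqref{eq:partialS2} together with $\widetilde{S}_a^{*}=S_a^{-1}\cdot\Lambda$) yields \eqref{eq:Todabilinearwave} directly. All three identities in the theorem will be built from this.

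Next, for the eigenfunction relations \eqref{eq:todaresbilinear2} and \eqref{eq:todaresbilinear3}, I would first reduce to the diagonal $\mathbf{t}'=\mathbf{t}$ by a Taylor expansion trick: the evolutions \eqref{eq:ckp-lax-operator3''} of $q_n^{(i)}$ use exactly the same differential operators $(L_1^p(n))_{\geq 0}$, $(L_2^p(n))_{<0}$ as the evolutions of $\psi_a(n,\mathbf{t},z)$, so a common operator family $A_{\alpha}(n,\mathbf{t},\Lambda)$ interpolates both sides in $\mathbf{t}'$. On the diagonal, represent $\psi_a$ and $\psi_a^{*}(n+1,\cdot,z)$ via $S_a$ and $S_a^{-*}$, then apply Lemma \ref{lemma:UV}: the right-hand side collapses through the telescope $-\iota_{\Lambda^{-1}}(\Lambda^{-1}-1)^{-1}+\iota_{\Lambda}(\Lambda^{-1}-1)^{-1}=\sum_{j\in\mathbb Z}\Lambda^{j}$ into $q_n^{(i)}(\mathbf{t})\sum_{j\in\mathbb Z}\Lambda^{j}$, matching the left-hand side. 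The proof of \eqref{eq:todaresbilinear3} is symmetric.

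Finally, for the constraint relation \eqref{eq:todaresbilinear1}, apply the operator $L_1^{M}(n,\mathbf{t},\Lambda)$ to \eqref{eq:Todabilinearwave} in the variable $n$ and use \eqref{eq:ckp-lax-operator1''}. This outputs $\oint_{C_R}z^{M-1}\psi_1\psi_1^{*}-\oint_{C_r}z^{-N-1}\psi_2\psi_2^{*}$ on the left, together with a remainder $A(n,n',\mathbf{t},\mathbf{t}')=\sum_{j\in\mathbb Z}\sum_{i}q_n^{(i)}(\mathbf{t})r_{n+j+1}^{(i)}(\mathbf{t})\oint_{C_r}z^{-1}\psi_2(n+j,\mathbf{t},z)\psi_2^{*}(n',\mathbf{t}',z)$ on the right. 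I would split the $j$-sum at $j=0$: the $j<0$ contribution is rewritten as a $C_R$-integral against $\psi_1\psi_1^{*}$ by \eqref{eq:Todabilinearwave}, and the two halves are then identified with the $\iota_{\Lambda}(\Lambda-1)^{-1}=-\sum_{j\geq 0}\Lambda^{j}$ and $\iota_{\Lambda^{-1}}(\Lambda-1)^{-1}=\sum_{j\geq 1}\Lambda^{-j}$ pieces of \eqref{eq:todaresbilinear3} (already established in the previous step). This collapses $A$ to $\sum_i q_n^{(i)}(\mathbf{t})r_{n'}^{(i)}(\mathbf{t}')$, giving \eqref{eq:todaresbilinear1}. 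The main obstacle is this last bookkeeping: correctly pairing the bilateral $j$-sum with the two $\iota$-conventions and ensuring that the contour switch on the $j<0$ half is consistent with the expansion direction prescribed in Lemma \ref{lemma:lambda-qr}; convergence of the $j$-sum is not an issue because $r^{(i)}_{n+j+1}$ enters only through the constraint \eqref{eq:ckp-lax-operator1''} whose right-hand side is a well-defined pseudo-difference operator.
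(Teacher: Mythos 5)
Your proposal is correct and follows essentially the same route as the paper's Section~3 proof: the eigenfunction relations \eqref{eq:todaresbilinear2}--\eqref{eq:todaresbilinear3} via the Taylor-interpolation reduction to $\mathbf{t}'=\mathbf{t}$ and Lemma \ref{lemma:UV} with the telescoping $-\iota_{\Lambda^{-1}}(\Lambda^{-1}-1)^{-1}+\iota_{\Lambda}(\Lambda^{-1}-1)^{-1}=\sum_{j\in\mathbb Z}\Lambda^{j}$, then the constraint relation \eqref{eq:todaresbilinear1} by applying $L_1^M$ to the Toda bilinear identity \eqref{eq:Todabilinearwave}, splitting the bilateral $j$-sum at $j=0$, moving the $j<0$ half to the $C_R$ contour, and collapsing the remainder through \eqref{eq:todaresbilinear3}. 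The only cosmetic difference is that you establish \eqref{eq:Todabilinearwave} up front, whereas the paper invokes it inside the proof of \eqref{eq:todaresbilinear1}.
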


\section{Conclusions and Discussions}
Starting from the constrained KP hierarchy $L^{k}=(L^{k})_{\geq0}+\sum_{i=1}^{m}q_{i}\partial^{-1}r_i$,
we firstly obtain its fermionic form
$\Omega_{(k)}(\tau_0\otimes\tau_0)
=\sum_{i=1}^{m}\tau_{1}^{(i)}\otimes\tau_{-1}^{(i)}$ by using the inverse of Bosonization I,
then by using Bosonization II, bigraded Toda hierarchy $L_1^M=L_2^N$ is generalized to GBTH in the form below
\begin{align*}
L_1^M(n)=L_2^{N}(n)+\sum_{j\in \mathbb Z}\sum_{i=1}^{m}q^{(i)}_n\Lambda^jr^{(i)}_{n+1},\quad M, N\geq 1,
\end{align*}
which is equivalent to the following three relations
\begin{align*}
&\Big(L_1^M(n)\Big)_{\leq -N-1}=\sum_{i=1}^{m}q^{(i)}_n\cdot \Lambda^{-N}\Delta^{-1} \cdot r^{(i)}_{n+1},\\
&\Big(L_2^{N}(n)\Big)_{\geq M+1}=-\sum_{i=1}^{m}q^{(i)}_n\cdot \Lambda^M\Delta^{*-1}\cdot r^{(i)}_{n+1},\\
&\Big(L_1^M(n)\Big)_{\geq -N}=\Big(L_2^{N}(n)\Big)_{\leq M}+\sum_{j=-N}^M\sum_{i=1}^{m}q^{(i)}_n\Lambda^jr^{(i)}_{n+1}.
\end{align*}
For GBTH, there are still many questions needing further discussion. For instance, solutions of GBTH remains needing further discussion.

\appendix
\section{}\label{appA}
The aim of this appendix is to prove that
if $\tau\in \mathcal{F}$\ satisfies fermionic KP hierarchy \eqref{eq:fermionic KP hierarchy eq1}, then there exists a unique $\mathcal{F}_k$\ such that $\tau\in\mathcal{F}_k$.
Firstly assume $\tau$\ satisfying fermionic KP hierarchy \eqref{eq:fermionic KP hierarchy eq1}, has the following form
$$\tau=\tau_{i_1}+\tau_{i_2}+\cdots +\tau_{i_\gamma}\in \mathcal{F},\quad 0\neq\tau_{i_\alpha}\in \mathcal{F}_{i_\alpha},\ i_\alpha\neq i_\beta,\ \gamma\geq2.$$
Then by \eqref{eq:fermionic KP hierarchy eq1}, we can know
$$
\sum_{j\in \mathbb{Z}}\sum_{\alpha,\ \beta=1}^{\gamma}\psi_j^+\tau_{i_\alpha}\otimes \psi_{-j}^-\tau_{i_\beta}=0.
$$
Note that\ $\psi_j^+\tau_{i_\alpha }\in \mathcal{F}_{i_\alpha+1}$\ and $\psi_{-j}^-\tau_{i_\beta}\in \mathcal{F}_{i_\beta-1}$, thus\
$\sum\limits_{j\in \mathbb{Z}}\psi_j^+\tau_{i_\alpha}\otimes \psi_{-j}^-\tau_{i_\beta}$ is linear independent for different $i_\alpha$ and
 $i_\beta$, which means
\begin{align*}
\sum_{j\in \mathbb{Z}}\psi_j^+\tau_{i_\alpha}\otimes \psi_{-j}^-\tau_{i_\beta}=0,\quad \forall\ \alpha,\beta=1,2,\cdots \gamma.
\end{align*}

For convenience, let us denote
$\Omega_{(0)}=\sum_{i\in \mathbb{Z}+1/2}\psi_i^{+}\otimes \psi_{-i}^-$.\
In order to prove $\gamma=1$, let us review two facts below. The first one\ \cite{Miwa2000} is that
$$
\Omega_{(0)}(\tau_l\otimes\tau_l)=0,\quad\tau_l\in\mathcal{F}_l
\Leftrightarrow
\tau_l=GL_{\infty}|l\rangle
.
$$
Here $
GL_\infty=\{e^{X_1}\cdots e^{X_k}|X_i=\sum\limits_{j,k\in \mathbb{Z}+1/2} a_{i,jk}\psi^+_j\psi^-_k\}
$\ and $|l\rangle$\ is defined by
\begin{align*}
|l\rangle=\begin{cases}
\psi^{+}_{\frac{1}{2}-l}\psi^{+}_{\frac{3}{2}-l}\cdots\psi^{+}_{-\frac{1}{2}}|0\rangle,&\quad l>0;\\
|0\rangle,&\quad l=0;\\
\psi^{-}_{\frac{1}{2}+l}\psi^{-}_{\frac{3}{2}+l}\cdots\psi^{-}_{-\frac{1}{2}}|0\rangle,&\quad l<0.
\end{cases}
\end{align*}
Another one is given in \cite{Kac1998}. If $\tau_i\in GL_{\infty}|i\rangle$, then
$$
\Omega_{(0)}(\tau_{s+l}\otimes\tau_{s})=0
\Leftrightarrow
(\Omega_{(0)})^l(\tau_{s}\otimes \tau_{s+l})=(-1)^{\frac{l(l-1)}{2}}l!\tau_{s+l}\otimes \tau_{s},\quad l\geq0.$$
Based on above two facts, we can know that when $i_\alpha>i_\beta$,
\begin{align*}
0=(\Omega_{(0)})^{i_\alpha-i_\beta}(\tau_{i_\beta}\otimes \tau_{i_\alpha})=
(-1)^{\frac{(i_\alpha-i_\beta)(i_\alpha-i_\beta-1)}{2}}(i_\alpha-i_\beta)!\tau_{i_\alpha}\otimes \tau_{i_\beta},
\end{align*}
where we have used $\Omega_{(0)}(\tau_{i_\alpha}\otimes \tau_{i_\beta})=0$.
Therefore $\tau_{i_\alpha}=0\ \text{or}\ \tau_{i_\beta}=0$,\ which contradicts assumptions.
In conclusion, there exists unique $\mathcal{F}_k$\ such that $\tau\in\mathcal{F}_k$.\\
\\
\\

\noindent{\bf Acknowledgements}:

This work is supported by National Natural Science Foundation of China (Grant Nos. 12171472 and 12261072)
and ``Qinglan Project" of Jiangsu Universities.\\

\noindent{\bf Conflict of Interest}:

 The authors have no conflicts to disclose.\\

\noindent{\bf Data availability}:

Date sharing is not applicable to this article as no new data were created or analyzed in this study.


\begin{thebibliography}{99}


\bibitem{Adler1999}
M. Adler and P. van Moerbeke, Vertex operator solutions to the discrete KP hierarchy, Comm. Math. Phys. 203 (1999) 185--210.

\bibitem{Alexandrov2013}
A. Alexandrov and A. Zabrodin, Free fermions and tau--functions, J. Geom. Phys. 67 (2013) 37--80.


\bibitem{Carlet2006}
G. Carlet, The extended bigraded Toda hierarchy, J. Phys. A: Math. Gen. 39 (2006) 9411.


\bibitem{Cheng1992}
Y. Cheng, Constraints of the Kadomtsev--Petviashvili hierarchy, J. Math. Phys. 33 (1992) 3774--3782.

\bibitem{Cheng&Zhang1994}
	Y. Cheng and Y. J. Zhang,
	Bilinear equations for the constrained KP hierarchy, Inverse Probl. 10 (1994) L11--L17.

\bibitem{Date1983}
	E. Date, M. Kashiwara, M. Jimbo and T. Miwa,
	Transformation groups for soliton equations,
	in {\it Nonlinear integrable systems--classical theory and quantum theory}, World Scientific, Singapore, 1983, pp. 39--119.



\bibitem{Guan}
W. C. Guan, S. Wang, W. J. Rui and J. P. Cheng, Lax structure and tau function for large BKP hierarchy, arXiv: 2404.09815.



\bibitem{Hirota2004}
 R. Hirota, The direct method in soliton theory, Cambridge University Press, Cambridge, 2004.

\bibitem{Jimbo1983}
M. Jimbo and T. Miwa, Solitons and infinite dimensional Lie algebras. Publ. Res. Inst. Math. Sci. 19 (1983) 943--1001.




\bibitem{Kac2023}
V. Kac and J. van de Leur, Multicomponent KP type hierarchies and their reductions, associated to conjugacy classes of Weyl groups of classical Lie algebras, J. Math. Phys. 64 (2023) 091702.
	
\bibitem{Kac1998}
V. Kac and J. van de Leur, The geometry of spinors and the multicomponent BKP and DKP hierarchies,
in \textit{The bispectral problem}, Amer. Math. Soc., Providence, 1998, pp. 159--202.


\bibitem{Kac2003}
V. Kac and J. van de Leur, The $n$--component KP hierarchy and representation theory, J. Math. Phys. 44 (2003) 3245--3293.


\bibitem{Konopelchenko1992}
B. Konopelchenko and W. Strampp, New reductions of the Kadomtsev--Petviashvili and two--dimensional Toda lattice hierarchies via symmetry constraints, J. Math. Phys. 33 (1992) 3676--3686.

\bibitem{Krichever2022}
I. Krichever and A. Zabrodin, Constrained Toda hierarchy and turning points of the Ruijsenaars--Schneider model, Lett. Math. Phys. 112 (2022) 23.


\bibitem{Krichever2021}
I. Krichever and A. Zabrodin, Kadomtsev-Petviashvili turning points and CKP hierarchy, Comm. Math. Phys. 386 (2021) 1643--1683.

\bibitem{Krichever2023}
I. Krichever and A. Zabrodin, Toda lattice with constraint of type B. Phys. D 453 (2023) 133827.


\bibitem{Miwa2000}
T. Miwa, M. Jimbo and E. Date, Solitons: Differential equations, symmetries and infinite dimensional algebras, Cambridge University Press, Cambridge, 2000.

\bibitem{Mulase1994}
M. Mulase, Algebraic theory of the KP equations, in \textit{Perspectives in mathematical physics}, International Press, Cambridge, 1994, pp. 151--217.

\bibitem{Oevel1996}
W. Oevel, Darboux transformations for integrable lattice systems, in \textit{Nonlinear Physics: theory and experiment}, World Scientific, Singapore, 1996, pp. 233--240.

\bibitem{Ohta1988}
Y. Ohta, J. Satsuma, D. Takahashi and T. Tokihiro, An elementary introduction to Sato theory, Progr. Theor. Phys. Suppl. 94 (1988) 210--241.

\bibitem{Takasaki2018}
K. Takasaki, Toda hierarchies and their applications, J. Phys. A 51 (2018) 203001.

\bibitem{Ueno1984}
K. Ueno and K. Takasaki, Toda lattice hierarchy, in \textit{Group representations and systems of differential equations}, Math. Soc. Jpn., Japan, 1984, pp. 1--95.

\bibitem{Vandeleur2015}
J. van de Leur and A. Y. Orlov, Pfaffian and determinantal tau functions, Lett. Math. Phys. 105 (2015) 1499--1531.


\bibitem{Willox2004}
R. Willox and J. Satsuma, Sato theory and transformation groups. A unified approach to integrable systems, in \textit{Discrete integrable systems}, Springer, Berlin, 2004, pp. 17--55.


\bibitem{Zabrodin2021}
A. Zabrodin, Kadomtsev-Petviashvili hierarchies of types B and C, Theor. Math. Phys. 208 (2021) 865--885.


\end{thebibliography}
\end{document}